\documentclass[12pt]{iopjournal}
\usepackage{lmodern}
\usepackage{CJK}
\usepackage{amsmath,amssymb,amsthm}
\usepackage[numbers,sort&compress]{natbib}

\hypersetup{
  hidelinks,
  pdftitle={Cutoff-Stable Null Convergence and Directional Rigidity in Bianchi I Spacetimes},
  pdfauthor={Ye Zhou and Alan Zhang},
  pdfsubject={Null completeness, optical rigidity, and global saturated-direction classification in Bianchi I spacetimes}
}

\newtheorem{theorem}{Theorem}[section]
\newtheorem{lemma}[theorem]{Lemma}
\newtheorem{proposition}[theorem]{Proposition}
\newtheorem{corollary}[theorem]{Corollary}
\newtheorem{assumption}[theorem]{Assumption}
\theoremstyle{definition}
\newtheorem{definition}[theorem]{Definition}
\theoremstyle{remark}
\newtheorem{remark}[theorem]{Remark}

\begin{document}
\begin{CJK*}{UTF8}{gbsn}

\articletype{Paper}
\title{Cutoff-Stable Null Convergence and Directional Rigidity in Bianchi I Spacetimes}

\author{Ye Zhou (周烨)$^{1,*}$\orcid{0009-0004-7050-7736} and Alan Zhang$^{2,3,*}$\orcid{0009-0000-1187-0755}}

\affil{$^1$Independent Researcher, Kunshan, Jiangsu, China}

\affil{$^2$Department of Physics \& Astronomy, McMaster University, Hamilton, Ontario L8S 4M1, Canada}

\affil{$^3$Perimeter Institute for Theoretical Physics, Waterloo, Ontario N2L 2Y5, Canada}

\affil{$^*$Authors to whom any correspondence should be addressed.}

\email{Ye Zhou: ye.zhou.horizon@gmail.com; Alan Zhang: zhana142@McMaster.ca}

\keywords{Bianchi I spacetime, null geodesic completeness, Raychaudhuri equation, null convergence condition, cutoff-stable averaged null convergence, rigidity}

\begin{abstract}
We derive an endpoint-free optical rigidity theorem and use it to separate two rigidity regimes in Bianchi I spacetimes, without assuming that the spatial metric is diagonal in a fixed basis.  For a smooth, regular, twist-free null congruence, the affine Raychaudhuri equation expresses a finite null-Ricci integral as an expansion boundary term minus a nonnegative optical bulk.  If the past and future cutoffs are removed independently and their two-end liminf is nonnegative, two-sided completeness forces the optical tensor and the null Ricci contraction to vanish pointwise.  In Bianchi I this freezes the spatial metric on the fixed kernel of the conserved covector.  We classify the resulting saturation geometry: a non-static metric has zero, one, or two unoriented saturated lines, equivalently zero, two, or four oriented rays, and a third line forces staticity.  Under the stronger pointwise null convergence condition, the existence of a single two-sided complete null geodesic already forces the spatial metric to be constant and the cosmic-time interval to be all of $\mathbb R$; on the Cartesian universal cover the spacetime is Minkowski.  Periodic models attain the four-ray saturation bound in the weaker cutoff-stable regime, with $I_p^{\mathrm{ind}}=-\infty$ in every nonsaturated direction, while pointwise null convergence fails on open time intervals.  Matter and achronal averaged-null-energy consequences are stated under an explicit matching assumption on field equations and cutoff prescriptions.
\end{abstract}

\section{Introduction}
\label{Sec:Intro}

Can a genuinely time-dependent anisotropic universe be null geodesically complete to both the past and the future while respecting a null energy condition?  Classical focusing theorems combine curvature, causal, and trapped-set assumptions to obtain incomplete causal geodesics \cite{Penrose1965,HawkingPenrose1970,HawkingEllis2023}, while the Borde--Guth--Vilenkin argument constrains past completeness through averaged expansion \cite{BordeGuthVilenkin2003}.  Bouncing, cyclic, past-extendible, and past-eternal scenarios make the equality case---complete propagation without ordinary null focusing---a concrete gravitational question \cite{VilenkinZhang2014,NomuraYoshida2021,LesnefskyEassonDavies2023,EassonLesnefsky2025,EassonLesnefsky2026,KinneyGeneral2026,Ritchie2026}.

Bianchi I is the simplest homogeneous setting in which anisotropy changes that equality problem qualitatively.  A null congruence is labelled by a conserved nonzero spatial covector $p$, but the corresponding comoving propagation direction $q(t)$ evolves and its screen optical tensor generally has shear.  An \emph{oriented conserved-covector direction} is the positive ray $[p]_+=\{cp:c>0\}$; an \emph{unoriented line} is $[p]=\{cp:c\ne0\}$ and identifies the antipodal rays of $p$ and $-p$.  These are projective labels in the fixed translation-covector space, not instantaneous unit vectors selected by the evolving metric.

Recent results constrain neighboring regimes without solving this equality geometry.  In generalized FLRW models, completeness requires a bounce, loitering, or emergent phase together with a period of accelerated expansion \cite{GarciaSaenzHuaZhao2024}.  In Bianchi I, Garcia-Saenz, Hua, and Sherif prove past incompleteness under the null energy condition when all principal directions expand at one time \cite{GarciaSaenzHuaSherif2026}.  Burwig and Easson identify an affine boundary-minus-bulk obstruction in regular flat and open FRW classes \cite{BurwigEassonPRD2026,BurwigEassonAffine2026}.  The anisotropic equality case is not obtained by replacing one Hubble rate with several: the positive bulk is matrix valued, includes shear, and freezes a codimension-one metric rather than a scalar scale factor.

Our first result isolates the underlying optical statement from Bianchi I.  Along any two-sided complete generator contained in a smooth, affinely parametrized, twist-free null congruence that remains regular on every finite affine interval, a nonnegative liminf of the null-Ricci integral under independently removed past and future cutoffs forces the entire optical tensor and the null Ricci contraction to vanish.  This is an equality theorem for a particular cutoff-stable condition, not a claim about every prescription called an averaged null energy or convergence condition.  Independent two-end limits, integral Riccati estimates, and recent rigidity refinements of singularity theorems provide adjacent precedents \cite{Borde1987,EhrlichKim1994,Verch2000,GallowayGraf2019,GallowayLing2025}; here the endpoint theorem identifies the equality sector before Bianchi I homogeneity is used to solve it.

For a fixed $p$, optical saturation freezes $h(t)$ on $\ker p$.  We prove invariantly that one saturated line is equivalent to $\dot h=p\mathbin{\odot}\alpha(t)$, where $p\mathbin{\odot}\alpha=p\otimes\alpha+\alpha\otimes p$.  Two distinct saturated lines $[p]$ and $[r]$ force $\dot h=f(t)p\mathbin{\odot}r$ and hence $h=h_0+\Phi(t)p\mathbin{\odot}r$.  If the metric is non-static, these two lines exhaust the global saturated set; a third distinct line forces $\dot h=0$.  Thus the non-static possibilities are zero, one, or two unoriented saturated lines, equivalently zero, two, or four oriented rays.  A smooth periodic family attains the four-ray case, so the global saturation bound is sharp.

The pointwise null convergence condition (NCC) is substantially more restrictive in this equality sector.  If even one null geodesic is complete in both affine directions, its conserved-covector congruence saturates by the optical theorem.  In coordinates adapted to that covector, the remaining metric data reduce to a positive Schur factor $a(t)$ and a mixed vector $b(t)$.  Pointwise NCC makes the full spatial null-Ricci quadratic form positive semidefinite; its vanishing along the saturated direction then forces the mixed row to vanish.  Affine completeness excludes the remaining non-static scalar degree of freedom.  We obtain $I=\mathbb R$ and $\dot h=0$.  Thus, unlike the cutoff-stable averaged regime, pointwise NCC admits no non-static Bianchi I spacetime containing even a single two-sided complete null geodesic.

The closest geometric precedents differ from the present results at both the hypothesis and conclusion levels.  Classical null focusing combines pointwise convergence with global causal or trapped-set assumptions to obtain incompleteness \cite{Penrose1965,HawkingPenrose1970}; integral Riccati and focusing theorems instead use their stated averaged or weighted Ricci bounds to obtain focusing or sign restrictions \cite{Borde1987,EhrlichKim1994}.  Neither route supplies a global normal form for conserved Bianchi I directions.  Null splitting starts from ambient null completeness, null convergence, and a null line and produces a smooth achronal totally geodesic null hypersurface \cite{Galloway2000}, while recent Penrose rigidity results give equality alternatives tied to Penrose-type incompleteness hypotheses \cite{GallowayLing2025}.  In cosmology, the affine FRW results use isotropy and their stated regularity or asymptotic assumptions to obtain a boundary--bulk obstruction for one scalar expansion \cite{BurwigEassonPRD2026,BurwigEassonAffine2026}; the Bianchi I expansion theorem assumes the null energy condition and simultaneous principal expansion at one time and concludes past incompleteness \cite{GarciaSaenzHuaSherif2026}.  Here the endpoint theorem first identifies optical saturation.  Under the cutoff-stable averaged hypothesis, homogeneity then yields the global zero/one/two-line classification and its sharp four-ray family; under pointwise NCC, positivity in all instantaneous null directions upgrades a single saturated complete direction to staticity.  This comparison distinguishes logical inputs and outputs rather than ranking the strength of the results.

The distinction between these two hypotheses is essential.  The four-ray result is a sharp statement about saturation geometry in the cutoff-stable averaged regime; it is not a pointwise-NCC sharpness example.  The periodic model constructed below is null complete and has exactly four saturated rays, but its effective source violates the null energy condition on open time intervals.  The averaged matter or semiclassical translation is correspondingly narrower.  It requires an independently justified achronal averaged null energy condition whose affine prescription implies the same two-end condition and a field equation with the required null projection.  No universal curved-spacetime quantum inequality is assumed \cite{KontouSanders2020,GrahamOlum2007,Wall2010,KontouOlum2013,KontouOlum2015,IshibashiMaedaMefford2019}.

Section~\ref{Sec:NullGeometry} develops Bianchi I null kinematics and optics.  Section~\ref{Sec:AffineRigidity} states the general endpoint-free theorem, separates the Bianchi I regularity and traversal lemmas, and defines the cutoff hierarchy.  Section~\ref{Sec:SaturationRigidity} proves the single-direction pointwise-NCC theorem and then the invariant one- and two-line saturation classification.  Section~\ref{Sec:Diagonal} gives the diagonal specialization and sharp periodic family.  Section~\ref{Sec:CompletenessANEC} isolates the conditional matter and semiclassical implications; Sec.~\ref{Sec:Discussion} gives the scope and limitations.  Technical linear algebra and curvature checks are collected in the appendices.

\section{Null Kinematics and Optics in General Bianchi I}
\label{Sec:NullGeometry}

We consider a $(n+1)$-dimensional Bianchi I spacetime, with $n\geq2$, in synchronous coordinates,
\begin{equation}
    ds^2=-dt^2+h_{ij}(t)\,dx^i dx^j,
    \qquad i,j=1,\ldots,n,
    \label{eq:BImetric}
\end{equation}
where $h_{ij}(t)$ is a smooth, symmetric, positive-definite matrix on a cosmic-time interval $I$. The physical case is $n=3$, but the geometric argument is dimension independent for $n\geq2$. The spatial slices are intrinsically flat and admit the Abelian group of translations generated by $\partial_i$ \cite{EllisMacCallum1969,EllisVanElst1999,PereiraPitrouUzan2007}.

We do not assume that $h_{ij}(t)$ is diagonal in a fixed spatial basis; Bianchi I metrics can have non-diagonal spatial components \cite{CroppVisser2011}. A time-dependent spatial change of basis does not in general preserve the synchronous form in Eq.~\eqref{eq:BImetric}, so it does not reduce the problem to the diagonal subclass considered in Sec.~\ref{Sec:Diagonal}.

Define
\begin{equation}
    K_{ij}:=\frac{1}{2}\dot h_{ij},
    \qquad
    \mathsf{K}^{i}{}_{j}:=h^{ik}K_{kj},
    \label{eq:Kdef}
\end{equation}
where a dot denotes differentiation with respect to $t$. The operator $\mathsf{K}$ is self-adjoint with respect to $h$. With the local spatial volume factor
\begin{equation}
    \mathcal V(t):=\sqrt{\det h(t)},
    \label{eq:Vdef}
\end{equation}
Jacobi's identity gives
\begin{equation}
    \frac{\dot{\mathcal V}}{\mathcal V}
    =\operatorname{tr}\mathsf{K}.
    \label{eq:Vdot}
\end{equation}

\subsection{Null geodesics and optical geometry}
\label{Sec:OpticalGeometry}

Let $k^\mu=dx^\mu/d\lambda$ be the tangent to a future-directed, affinely parametrized null geodesic. Spatial translation invariance implies conservation of the covariant spatial momentum,
\begin{equation}
    p_i:=g_{i\mu}k^\mu
    =h_{ij}\frac{dx^j}{d\lambda},
    \qquad
    \frac{dp_i}{d\lambda}=0.
    \label{eq:pconserved}
\end{equation}
For $p_i\neq0$, define
\begin{equation}
    \omega_p(t):=\left(h^{ij}(t)p_i p_j\right)^{1/2}.
    \label{eq:omegadef}
\end{equation}
For comoving observers with unit velocity $u^\mu=(1,0,\ldots,0)$, $\omega_p=-u_\mu k^\mu$ is the locally measured frequency. The null condition fixes the future-directed branch as
\begin{equation}
    \frac{dt}{d\lambda}=\omega_p,
    \qquad
    \frac{dx^i}{d\lambda}=h^{ij}p_j,
    \qquad
    d\lambda=\frac{dt}{\omega_p(t)}.
    \label{eq:nullgeodesics}
\end{equation}
 A positive rescaling of $p_i$ amounts to a positive affine rescaling and leaves the spatial propagation direction unchanged.  We use three related but distinct objects.  The covector $p\ne0$ is conserved along the geodesic; its \emph{oriented conserved-covector direction} is the positive ray $[p]_+=\{cp:c>0\}$; and its \emph{unoriented covector line} is $[p]=\{cp:c\ne0\}$.  Thus $p$ and $-p$ define different oriented rays but the same line.  The time-dependent vector $q(t)$ defined below is the physical propagation direction and must not be identified with the conserved covector label.  When conserved directions are drawn on a circle, as in Fig.~\ref{fig:sharpmodel}, that circle uses an auxiliary Euclidean normalization in covector space solely for visualization; it is not a dynamically preferred unit sphere.

The spatial unit direction measured by the comoving observers is
\begin{equation}
    q^i:=\frac{h^{ij}p_j}{\omega_p},
    \qquad
    h_{ij}q^i q^j=1,
    \qquad
    q_i=\frac{p_i}{\omega_p}.
    \label{eq:qdef}
\end{equation}
Although $p_i$ is conserved, $q^i$ is generally time dependent. Differentiating Eq.~\eqref{eq:omegadef} gives
\begin{equation}
    \frac{\dot\omega_p}{\omega_p}
    =-K_{ij}q^i q^j.
    \label{eq:omegadot}
\end{equation}

For each fixed nonzero $p_i$, the corresponding geodesics form a homogeneous null congruence. On the Cartesian universal cover, the function
\begin{equation}
    u_p(t,x)=p_i x^i-\int^t\omega_p(s)\,ds
    \label{eq:opticalfunction}
\end{equation}
satisfies the eikonal equation, with
\begin{equation}
    k_\mu=\nabla_\mu u_p=(-\omega_p,p_i),
    \qquad
    g^{\mu\nu}k_\mu k_\nu=0.
    \label{eq:eikonal}
\end{equation}
Because $k_\mu$ is an exact null one-form, the congruence is affinely geodesic and hypersurface orthogonal; in particular, its twist vanishes.

The screen is the $h$-orthogonal complement of $q$, with spatial projector
\begin{equation}
    \mathsf{P}^{i}{}_{j}
    =\delta^{i}{}_{j}-q^i q_j.
    \label{eq:Pdef}
\end{equation}
Let $E_A^\mu=(0,E_A^i)$, $A=1,\ldots,n-1$, be an $h$-orthonormal screen basis. The optical tensor is
\begin{equation}
    B_{AB}:=E_A^\mu E_B^\nu\nabla_\nu k_\mu.
    \label{eq:Bdef}
\end{equation}
Since $k_i=p_i$ is spatially constant and $\Gamma^0{}_{ij}=K_{ij}$, one has
\begin{equation}
    \nabla_j k_i=\omega_p K_{ij},
    \label{eq:nablak}
\end{equation}
and therefore
\begin{equation}
    B_{AB}=\omega_p K_{ij}E_A^iE_B^j.
    \label{eq:Bcomponents}
\end{equation}
Equivalently, as a self-adjoint operator on the screen,
\begin{equation}
    B_p
    =\omega_p\,
    \mathsf{P}\mathsf{K}\mathsf{P}
    \big|_{q^\perp}.
    \label{eq:Bmaster}
\end{equation}
Null propagation and Sachs optics in diagonal Bianchi I spacetimes have been studied from the early work of Saunders to modern treatments of the Jacobi map and optical shear \cite{Saunders1968,Saunders1969,FleuryPitrouUzan2015,FleuryNugierFanizza2016}. The operator form in Eq.~\eqref{eq:Bmaster} is adapted to the present problem because it keeps the affine normalization explicit without requiring a diagonal spatial metric.

The optical expansion is
\begin{align}
    \Theta_p:=\operatorname{tr}B_p
    &=
    \omega_p\left(
        \operatorname{tr}\mathsf{K}
        -K_{ij}q^i q^j
    \right)
    \nonumber\\
    &=
    \frac{d}{d\lambda}
    \ln\!\left(\mathcal V\omega_p\right),
    \label{eq:Thetadef}
\end{align}
where Eqs.~\eqref{eq:Vdot} and \eqref{eq:omegadot} have been used.

The screen also has a time-independent description in terms of the conserved covector. Since $h_{ij}q^i v^j=p_i v^i/\omega_p$ for any spatial vector $v^i$,
\begin{equation}
    q^\perp=\ker p
    :=\left\{v^i:\ p_i v^i=0\right\}.
    \label{eq:screenkerp}
\end{equation}
Thus the physical propagation direction $q(t)$ evolves under anisotropic expansion, whereas the associated transverse hyperplane in the translation space remains fixed. This distinction will be important when optical saturation is translated into a restriction on $h_{ij}(t)$ in Sec.~\ref{Sec:SaturationRigidity}.

\begin{definition}[Optical saturation]
\label{def:saturation}
A fixed conserved-covector ray $[p]_+$ is \emph{optically saturated on a time interval} if its homogeneous congruence satisfies $B_p(t)=0$ at every time in that interval.  A line $[p]$ is saturated when either, and hence both, of its antipodal rays are saturated.  This is a global-in-time condition.  At any fixed time, Eqs.~\eqref{eq:Bmaster}, \eqref{eq:screenkerp}, and \eqref{eq:Kdef} show that $B_p(t)=0$ if and only if $\dot h(t)|_{\ker p\times\ker p}=0$.  Optical saturation on an interval therefore requires this instantaneous equality at every time in that interval.  By contrast, the averaged condition $\underline I_p\ge0$ is a distinct integral hypothesis.  For a two-sided complete fixed-$p$ generator, Theorem~\ref{thm:affinerigidity} together with Lemmas~\ref{lem:fixedpregularity} and \ref{lem:traversal} shows that this integral hypothesis implies global optical saturation.
\end{definition}

We will also use the regularity of the fixed-$p$ congruence at finite cosmic time. From Eq.~\eqref{eq:nullgeodesics},
\begin{equation}
    \frac{dx^i}{dt}
    =\frac{h^{ij}p_j}{\omega_p}.
    \label{eq:geodesiccosmictime}
\end{equation}
For fixed $p_i$, the right-hand side is independent of the initial spatial position, so different members of the homogeneous congruence remain related by spatial translations. On every compact subinterval of $I$, smooth positive definiteness of $h_{ij}$ bounds $\omega_p$ above and away from zero; consequently, $q^i$, $\mathsf{P}$, $B_p$, and $\Theta_p$ remain regular. An inextendible fixed-$p$ null geodesic can therefore have a finite affine endpoint only if $t$ approaches an endpoint of $I$.

\subsection{The affine Raychaudhuri identity}
\label{Sec:AffineIdentity}

We use the curvature convention
\begin{equation}
    [\nabla_\mu,\nabla_\nu]V^\rho
    =R^\rho{}_{\sigma\mu\nu}V^\sigma,
    \qquad
    R_{\mu\nu}=R^\rho{}_{\mu\rho\nu}.
    \label{eq:curvatureconvention}
\end{equation}
For an affinely parametrized null congruence, the Raychaudhuri equation reads \cite{Raychaudhuri1955,Sachs1961}
\begin{equation}
    \frac{d\Theta_p}{d\lambda}
    =
    -\operatorname{tr}(B_p^2)
    -R_{\mu\nu}k^\mu k^\nu,
    \label{eq:Raychaudhuri}
\end{equation}
where the twist term vanishes by hypersurface orthogonality.

Decomposing the optical tensor as
\begin{equation}
    B_p
    =
    \sigma_p
    +\frac{\Theta_p}{n-1}\,
    \mathsf{1}_{q^\perp},
    \qquad
    \operatorname{tr}\sigma_p=0,
    \label{eq:sheardecomp}
\end{equation}
gives
\begin{equation}
    \operatorname{tr}(B_p^2)
    =
    \operatorname{tr}(\sigma_p^2)
    +\frac{\Theta_p^2}{n-1}
    \geq
    \frac{\Theta_p^2}{n-1}.
    \label{eq:opticalbound}
\end{equation}
Because $B_p$ is self-adjoint on the screen, $\operatorname{tr}(B_p^2)\geq0$, with equality if and only if $B_p=0$.

For a finite affine interval $\lambda_1<\lambda_2$, define
\begin{align}
    I_p[\lambda_1,\lambda_2]
    &:=
    \int_{\lambda_1}^{\lambda_2}
    R_{\mu\nu}k^\mu k^\nu\,d\lambda,
    \label{eq:Idef}\\
    J_p[\lambda_1,\lambda_2]
    &:=
    \int_{\lambda_1}^{\lambda_2}
    \operatorname{tr}(B_p^2)\,d\lambda.
    \label{eq:Jdef}
\end{align}
Integrating Eq.~\eqref{eq:Raychaudhuri} yields the exact finite-affine identity
\begin{equation}
    I_p[\lambda_1,\lambda_2]
    =
    \Theta_p(\lambda_1)
    -\Theta_p(\lambda_2)
    -J_p[\lambda_1,\lambda_2].
    \label{eq:affinesumrule}
\end{equation}
The optical bulk can equivalently be written in cosmic time as
\begin{equation}
    J_p[\lambda_1,\lambda_2]
    =
    \int_{t_1}^{t_2}
    \omega_p(t)\,
    \operatorname{tr}\!\left[
        \left(
            \mathsf{P}\mathsf{K}\mathsf{P}
            \big|_{q^\perp}
        \right)^2
    \right]dt,
    \qquad
    t_a=t(\lambda_a).
    \label{eq:Jcosmictime}
\end{equation}

The identity in Eq.~\eqref{eq:affinesumrule} separates the integrated null curvature into an optical boundary term and a nonpositive bulk contribution. Both expansion and shear enter the latter through Eq.~\eqref{eq:opticalbound}; in an anisotropic spacetime the shear term cannot in general be discarded. On a finite affine interval, however, the boundary term has no definite sign, so Eq.~\eqref{eq:affinesumrule} alone gives no sign constraint on $I_p[\lambda_1,\lambda_2]$.

For a complete null geodesic, one could remove the boundary term by imposing asymptotic conditions on $\Theta_p$ at the two affine ends. We will not make such assumptions. Instead, Sec.~\ref{Sec:AffineRigidity} keeps the two endpoints as independent affine cutoffs and controls the boundary contribution directly using Eq.~\eqref{eq:opticalbound}. The analysis through that section is purely geometric and does not use a gravitational field equation.

\section{Endpoint-Free Optical Rigidity}
\label{Sec:AffineRigidity}

The finite-affine identity in Eq.~\eqref{eq:affinesumrule} reduces the problem to the behavior of the optical expansion at the two affine ends. Completeness places those ends at infinite affine parameter, but by itself gives no pointwise limit for the expansion. The appropriate question is therefore whether the boundary term can be controlled without assuming any asymptotic expansion law.  The result in this section is first stated for an arbitrary regular twist-free null congruence; Bianchi I enters only through the two lemmas and corollary at the end.

Integral curvature conditions and Riccati estimates have long been used to relate geodesic focusing, conjugate points, and completeness \cite{Tipler1978Energy,Tipler1978ODE,ChiconeEhrlich1980,Borde1987,EhrlichKim1994,FewsterGalloway2011}; related rigidity consequences for complete null lines appear, for example, in Ref.~\cite{GallowayGraf2019}.  The proof below has two branches.  If the total optical bulk is finite, square-integrability supplies remote cutoffs where the boundary expansion tends to zero.  If the bulk diverges, a Riccati comparison supplies cutoffs where the boundary is subleading to the accumulated bulk.  The past and future choices are made independently in both branches.

\subsection{Complete affine cutoffs}
\label{Sec:AffineCutoffs}

Let $\gamma$ be an inextendible null generator that is complete in both affine directions and belongs to a smooth, affinely parametrized, twist-free null congruence defined along it.  After choosing an affine origin, take $\lambda\in\mathbb R$.  We retain a subscript $p$ only as a label for later Bianchi I application.  For independent cutoffs $U,V>0$, define
\begin{align}
    I_p(U,V)
    &:=
    \int_{-U}^{V}
    R_{\mu\nu}k^\mu k^\nu\,d\lambda,
    \label{eq:Icutoff}\\
    J_p(U,V)
    &:=
    \int_{-U}^{V}
    \operatorname{tr}(B_p^2)\,d\lambda.
    \label{eq:Jcutoff}
\end{align}
The optical bulk is monotone in each cutoff. Writing
\begin{equation}
    J_{p,+}(L)
    :=
    \int_0^L\operatorname{tr}(B_p^2)\,d\lambda,
    \qquad
    J_{p,-}(L)
    :=
    \int_{-L}^0\operatorname{tr}(B_p^2)\,d\lambda,
    \label{eq:Jhalves}
\end{equation}
we have $J_p(U,V)=J_{p,-}(U)+J_{p,+}(V)$. Since the integrand is continuous and nonnegative, the full optical bulk is well defined as an extended number,
\begin{equation}
    J_p^{\mathrm{tot}}
    :=
    \int_{-\infty}^{+\infty}
    \operatorname{tr}(B_p^2)\,d\lambda
    \in[0,+\infty].
    \label{eq:Jfull}
\end{equation}
For finite cutoffs, Eq.~\eqref{eq:affinesumrule} becomes
\begin{equation}
    I_p(U,V)
    =
    \Theta_p(-U)-\Theta_p(V)
    -J_p(U,V).
    \label{eq:cutoffidentity}
\end{equation}

When it exists, the two-sided null-curvature integral will mean the independent-cutoff limit
\begin{equation}
    I_p^{\mathrm{ind}}
    :=
    \lim_{\substack{U\to+\infty\\V\to+\infty}}
    I_p(U,V),
    \label{eq:independentintegral}
\end{equation}
with the limit independent of the relative rate at which $U$ and $V$ diverge. For a finite value $L$, the epsilon definition is: for every $\epsilon>0$ there is $R_\epsilon$ such that $U,V\ge R_\epsilon$ implies $|I_p(U,V)-L|<\epsilon$.  For finite $L$, this is equivalent to separate convergence of the two one-sided improper integrals. Indeed, writing $I_p(U,V)=A_-(U)+A_+(V)$, a finite two-variable limit makes each $A_\pm$ Cauchy by holding the other cutoff fixed beyond $R_\epsilon$; the converse is immediate. A cancellation confined to $U=V$ is therefore insufficient.

We will also use the asymptotic lower limit
\begin{equation}
    \underline I_p
    :=
    \lim_{R\to+\infty}
    \inf_{\substack{U\ge R\\V\ge R}}
    I_p(U,V),
    \label{eq:cutoffliminf}
\end{equation}
which is defined in the extended real line whether or not Eq.~\eqref{eq:independentintegral} converges; the inner infimum is monotone nondecreasing with $R$.  The condition $\underline I_p\ge0$ is exactly the assertion that for every $\epsilon>0$ there is $R_\epsilon$ such that $U,V\ge R_\epsilon$ implies $I_p(U,V)\ge-\epsilon$.  It is unchanged by a shift of the affine origin. Under a positive affine rescaling it acquires only an overall positive factor, so its sign is independent of affine normalization.

For comparison, define the symmetric principal value $I_p^{\mathrm{PV}}:=\lim_{L\to\infty}I_p(L,L)$ when it exists.  A specified path $V=\psi(U)$ defines a path limit; weighted and smeared conditions replace the sharp characteristic cutoff by a stated family of test functions.  Such finite-affine or test-function conditions are important in quantum energy-inequality research, but proposed smeared bounds carry their own field, state, scale, and sampling hypotheses \cite{KontouSanders2020,FreivogelKrommydas2018}.  These prescriptions are not identified with one another.  Their endpoint logic is as follows.

\begin{description}
\item[Finite sharp cutoff $I_p(U,V)$.]
No limiting operation is involved, and the past and future cutoffs $U$ and $V$ are independent.
\item[Independent limit $I_p^{\mathrm{ind}}$.]
The stated finite or extended limit must hold uniformly when $U$ and $V$ become large independently.
\item[Independent lower limit $\underline I_p$.]
No two-variable limit need exist; the definition takes the infimum over all independently large $U$ and $V$.
\item[Symmetric principal value $I_p^{\mathrm{PV}}$.]
The limit, when it exists, probes only the prescribed diagonal path $U=V$.
\item[Specified path.]
The limiting process imposes a relation $V=\psi(U)$ and tests only that path through the cutoff plane.
\item[Weighted or smeared prescription.]
Both the limiting operation and the available endpoint freedom are fixed by the declared test-function family.
\end{description}

A smooth abstract counterexample shows why the distinction matters.  For $f(\lambda)=\tanh\lambda$,
\begin{equation}
 \int_{-U}^{V}f(\lambda)\,d\lambda
 =\log\cosh V-\log\cosh U.
 \label{eq:tanhcutoff}
\end{equation}
The symmetric principal value is zero, and the offset path $V=U+s$ tends to $s$, but independent variation gives liminf $-\infty$ and limsup $+\infty$.  This example concerns cutoff logic only; it is not asserted to be a Bianchi I null-Ricci contraction and does not by itself decide any weighted or smeared prescription.

The endpoint estimate follows from Eq.~\eqref{eq:opticalbound}. On the future half-line, set $J_+(L):=J_{p,+}(L)$. Then
\begin{equation}
    J_+'(L)
    =
    \operatorname{tr}(B_p^2)(L)
    \geq
    \frac{\Theta_p^2(L)}{n-1}.
    \label{eq:Jprimebound}
\end{equation}
If $J_+(\infty)<\infty$, the right-hand side is integrable and hence $\Theta_p\in L^2([0,\infty))$. There is therefore a sequence $L_{+,r}\to\infty$ such that
\begin{equation}
    \Theta_p(L_{+,r})\longrightarrow0.
    \label{eq:futurefiniteescape}
\end{equation}
If instead $J_+(L)\to\infty$, one can choose $L_{+,r}\to\infty$ such that
\begin{equation}
    \frac{|\Theta_p(L_{+,r})|}
    {J_+(L_{+,r})}
    \longrightarrow0.
    \label{eq:futureinfiniteescape}
\end{equation}
To see this, suppose that $|\Theta_p|/J_+$ were bounded below by some $\varepsilon>0$ for all sufficiently large $L$. Then, after choosing $L_0$ with $J_+(L_0)>0$, Eq.~\eqref{eq:Jprimebound} would imply $J_+'\geq \varepsilon^2J_+^2/(n-1)$ for all sufficiently large $L$, so
\begin{equation}
    \frac{d}{dL}\!\left(\frac{1}{J_+}\right)
    \leq
    -\frac{\varepsilon^2}{n-1}.
    \label{eq:inverseJbound}
\end{equation}
Integrating from a sufficiently large $L_0$ would make $1/J_+$ nonpositive at a finite value of $L$, contradicting the assumed regularity on finite affine intervals, which guarantees $J_+(L)<\infty$ for every finite $L$.

The same argument applies independently to the past half-line, with $\lambda$ replaced by $-\lambda$. Thus each affine end admits arbitrarily remote cutoffs for which the boundary expansion tends to zero if the corresponding optical bulk is finite, or is subleading to that bulk if it diverges. No limit of $\Theta_p$ at either affine end is required.

\subsection{Rigidity from cutoff-stable null convergence}
\label{Sec:RigidityTheorem}

\begin{theorem}[Endpoint-free optical rigidity]
\label{thm:affinerigidity}
Let $(M^d,g)$, $d=n+1\ge3$, contain a two-sided complete affinely parametrized null geodesic $\gamma:\mathbb R\to M$.  Suppose an open neighbourhood of $\gamma(\mathbb R)$ carries a smooth, affinely geodesic, twist-free null congruence with tangent $k$, and suppose its screen optical endomorphism $B_p(\lambda)$ is finite and continuous on every compact affine interval.  Let $J_p^{\mathrm{tot}}$ and $\underline I_p$ be defined by Eqs.~\eqref{eq:Jfull} and \eqref{eq:cutoffliminf}. Then
\begin{equation}
    J_p^{\mathrm{tot}}<\infty
    \quad\Longrightarrow\quad
    \underline I_p\leq-J_p^{\mathrm{tot}}\leq0,
    \qquad
    J_p^{\mathrm{tot}}=\infty
    \quad\Longrightarrow\quad
    \underline I_p=-\infty.
    \label{eq:theoremcutoffbounds}
\end{equation}
Consequently,
\begin{equation}
    \underline I_p\geq0
    \quad\Longrightarrow\quad
    J_p^{\mathrm{tot}}=0,
    \quad B_p=0,
    \quad \Theta_p=0,
    \quad R_{\mu\nu}k^\mu k^\nu=0
    \label{eq:theoremcutoffsaturation}
\end{equation}
pointwise along the generator. If in addition the independent two-cutoff limit \eqref{eq:independentintegral} exists in the extended real line, it cannot equal $+\infty$ and
\begin{equation}
    I_p^{\mathrm{ind}}=-J_p^{\mathrm{tot}}\leq0.
    \label{eq:theoremindependentsum}
\end{equation}
Thus a nonnegative independent-cutoff integral is necessarily zero and optically saturated along $\gamma$.  No homogeneity, field equation, achronality, or endpoint limit for the expansion is assumed.
\end{theorem}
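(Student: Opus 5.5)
The plan is to work directly from the cutoff identity \eqref{eq:cutoffidentity}, $I_p(U,V)=\Theta_p(-U)-\Theta_p(V)-J_{p,-}(U)-J_{p,+}(V)$, together with the endpoint sequences constructed in Sec.~\ref{Sec:AffineCutoffs}. On the future half-line we fix a sequence $L_{+,r}\to\infty$: if $J_{p,+}(\infty)<\infty$ it satisfies \eqref{eq:futurefiniteescape}, and otherwise it satisfies \eqref{eq:futureinfiniteescape}. We choose $L_{-,r}$ on the past half-line in the same way, independently of the future choice. The assumed smoothness and continuity of $B_p$ on compact intervals supply both the Raychaudhuri integration \eqref{eq:affinesumrule} and the finiteness $J_\pm(L)<\infty$ that the Riccati comparison needs. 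Because $\underline I_p$ is an infimum over all independently large $(U,V)$, we have $\underline I_p\le\liminf_r I_p(L_{-,r},L_{+,r})$ for any such pair of sequences. The first half of the proof therefore reduces to evaluating this liminf.

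\emph{Finite bulk.} If $J_p^{\mathrm{tot}}<\infty$, both halves are finite and the boundary terms $\Theta_p(-L_{-,r})$ and $\Theta_p(L_{+,r})$ tend to $0$. Then $I_p(L_{-,r},L_{+,r})\to -J_p^{\mathrm{tot}}$, which gives $\underline I_p\le -J_p^{\mathrm{tot}}\le0$.

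\emph{Divergent bulk.} If $J_p^{\mathrm{tot}}=\infty$, at least one half diverges, say $J_{p,+}$. The future boundary term is then $o(J_{p,+}(L_{+,r}))$. The past contribution $\Theta_p(-L_{-,r})-J_{p,-}(L_{-,r})$ is either bounded, when the past bulk is finite and the sequence is of type \eqref{eq:futurefiniteescape}, or equal to $-J_{p,-}(1+o(1))$, when the past bulk diverges. In either case
\begin{equation*}
I_p(L_{-,r},L_{+,r})\le C-\tfrac12 J_{p,+}(L_{+,r})-\tfrac12 J_{p,-}(L_{-,r})\to-\infty ,
\end{equation*}
where only the nonnegativity of $J_{p,-}$ is needed if the past bulk diverges. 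The symmetric case, in which the past half diverges, is handled the same way. Hence $\underline I_p=-\infty$. The one point requiring care is that the two sequences are not tied together. The definition of $\underline I_p$ allows $U$ and $V$ to be chosen independently, so the diagonal index $r$ pairing the sequences is legitimate, and this is exactly where the independent-cutoff hypothesis is used.

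\emph{Saturation.} If $\underline I_p\ge0$, the dichotomy \eqref{eq:theoremcutoffbounds} excludes $J_p^{\mathrm{tot}}=\infty$ and forces $-J_p^{\mathrm{tot}}\ge0$, so $J_p^{\mathrm{tot}}=0$. Since $\operatorname{tr}(B_p^2)$ is continuous and nonnegative, it vanishes identically. Self-adjointness then gives $B_p=0$, hence $\Theta_p=\operatorname{tr}B_p=0$ and $d\Theta_p/d\lambda=0$. The Raychaudhuri equation \eqref{eq:Raychaudhuri} then yields $R_{\mu\nu}k^\mu k^\nu=0$ pointwise.

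\emph{Independent limit.} Suppose the limit $I_p^{\mathrm{ind}}$ exists in $[-\infty,+\infty]$. Then it equals $\underline I_p$, which by \eqref{eq:theoremcutoffbounds} is at most $0$, so it cannot be $+\infty$. If it is finite, then $\underline I_p$ is finite, which forces $J_p^{\mathrm{tot}}<\infty$. To obtain equality rather than the inequality $\le -J_p^{\mathrm{tot}}$, we evaluate the existing limit along the particular sequences above, and it equals $-J_p^{\mathrm{tot}}$ there. If the limit is $-\infty$, the identity holds trivially provided $J_p^{\mathrm{tot}}=\infty$. The remaining case, a limit of $-\infty$ with finite bulk, would contradict the sequential value $-J_p^{\mathrm{tot}}$. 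So in every case $I_p^{\mathrm{ind}}=-J_p^{\mathrm{tot}}$.

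The main obstacle I anticipate is the divergent-bulk branch. There one must make sure that the subleading-boundary sequences exist, which is the Riccati argument \eqref{eq:inverseJbound}, and handle the mixed case where one end has finite bulk and the other infinite. The remaining steps are bookkeeping on \eqref{eq:cutoffidentity}.
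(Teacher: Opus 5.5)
Your proposal is correct and follows essentially the same route as the paper: independent escape sequences (from $\Theta_p\in L^2$ when a half-bulk is finite, and from the Riccati comparison when it diverges) are inserted into the cutoff identity \eqref{eq:cutoffidentity} to get $\underline I_p\le -J_p^{\mathrm{tot}}$ or $\underline I_p=-\infty$. From there $J_p^{\mathrm{tot}}=0$ forces $B_p=0$ and $R_{\mu\nu}k^\mu k^\nu=0$, and any existing two-variable limit must agree with these sequential values. The only cosmetic difference is in the divergent-bulk branch, where you use the explicit bound $I_p\le C-\tfrac12(J_{p,+}+J_{p,-})$ and the paper instead shows $I_p(U_r,V_r)/J_r\to-1$.
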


\begin{proof}
We apply the endpoint estimate to Eq.~\eqref{eq:cutoffidentity}. Suppose first that the total optical bulk in Eq.~\eqref{eq:Jfull} is finite. The two half-line estimates provide independent sequences $U_r,V_r\to\infty$ such that
\begin{equation}
    \Theta_p(-U_r)\to0,
    \qquad
    \Theta_p(V_r)\to0.
    \label{eq:finitebulkcutoffs}
\end{equation}
Since $J_p(U_r,V_r)\to J_p^{\mathrm{tot}}$, Eq.~\eqref{eq:cutoffidentity} then gives
\begin{equation}
    I_p(U_r,V_r)
    \longrightarrow-J_p^{\mathrm{tot}},
    \label{eq:finitebulksubsequence}
\end{equation}
and therefore
\begin{equation}
    \underline I_p
    \leq
    -J_p^{\mathrm{tot}}
    \leq0.
    \label{eq:liminffinitebulk}
\end{equation}

If $J_p^{\mathrm{tot}}=+\infty$, choose the past and future cutoffs using the corresponding finite- or infinite-bulk alternatives above. On an end with finite bulk the boundary value tends to zero, while on an end with infinite bulk it is $o(J_{p,\pm})$. Setting $J_r:=J_p(U_r,V_r)$, one has $J_r\to+\infty$ and
\begin{equation}
    \frac{\Theta_p(-U_r)-\Theta_p(V_r)}{J_r}
    \longrightarrow0.
    \label{eq:boundarysubleading}
\end{equation}
It follows from Eq.~\eqref{eq:cutoffidentity} that
\begin{equation}
    \frac{I_p(U_r,V_r)}{J_r}
    \longrightarrow-1,
    \label{eq:IoverJ}
\end{equation}
so that
\begin{equation}
    \underline I_p=-\infty.
    \label{eq:liminfinfinitebulk}
\end{equation}

Combining the two cases, a complete generator with
\begin{equation}
    \underline I_p\geq0
    \label{eq:cutoffAANC}
\end{equation}
must satisfy
\begin{equation}
    J_p^{\mathrm{tot}}=0.
    \label{eq:Jzero}
\end{equation}
Because $\operatorname{tr}(B_p^2)$ is continuous and nonnegative, Eq.~\eqref{eq:Jzero} implies
\begin{equation}
    B_p=0
    \label{eq:Bzero}
\end{equation}
pointwise along the complete geodesic. Hence $\Theta_p=0$, and the Raychaudhuri equation \eqref{eq:Raychaudhuri} further gives
\begin{equation}
    R_{\mu\nu}k^\mu k^\nu=0.
    \label{eq:Rkkzero}
\end{equation}
Cutoff-stable nonnegative averaged null convergence is therefore compatible with completeness only through exact saturation of both the optical bulk and the null Ricci contraction.

When the independent-cutoff integral in Eq.~\eqref{eq:independentintegral} exists, the preceding escape sequences determine it completely.  In the finite-bulk case Eq.~\eqref{eq:finitebulksubsequence} gives a sequence of independently remote cutoffs for which $I_p(U_r,V_r)\to-J_p^{\mathrm{tot}}$; existence of the full two-variable limit requires every such sequence to have the same limit.  In the infinite-bulk case Eq.~\eqref{eq:IoverJ} gives a sequence for which $I_p(U_r,V_r)\to-\infty$, so the only possible extended two-variable limit is $-\infty$.  Hence
\begin{equation}
    I_p^{\mathrm{ind}}=-J_p^{\mathrm{tot}}\leq0.
    \label{eq:completeaffinesumrule}
\end{equation}
The equality is understood in the extended sense when $J_p^{\mathrm{tot}}=+\infty$; in particular, the independent-cutoff integral cannot be $+\infty$. Thus
\begin{equation}
    I_p^{\mathrm{ind}}\geq0
    \quad\Longrightarrow\quad
    I_p^{\mathrm{ind}}=J_p^{\mathrm{tot}}=0,
    \qquad
    B_p=0,
    \qquad
    R_{\mu\nu}k^\mu k^\nu=0.
    \label{eq:fullsaturation}
\end{equation}
The lower-limit condition in Eq.~\eqref{eq:cutoffAANC} is more general in scope: it reaches the same rigidity conclusion without assuming that the improper null-curvature integral exists.

No asymptotic value of $\Theta_p$ has entered the argument. Finite optical bulk supplies remote cutoffs on which the boundary expansion becomes arbitrarily small, while divergent optical bulk supplies cutoffs on which the boundary is negligible relative to the accumulated bulk. The independent treatment of the two affine ends is essential: a symmetric principal value can conceal cancellations that are absent under independent cutoffs.  The label $p$ has played no role in the proof.
\end{proof}

\begin{remark}[Why a regular congruence is needed in the general theorem]
\label{rem:conevertex}
Outside the Bianchi I setting, completeness of a single null geodesic is insufficient to invoke Theorem~\ref{thm:affinerigidity}.  In $d$-dimensional Minkowski space the null-cone congruence has, away from its vertex, $R_{kk}=0$, $B=\lambda^{-1}\mathsf 1$, and $\Theta=(d-2)/\lambda$.  A radial generator extends through the vertex as a complete Minkowski null geodesic, and its Ricci cutoff integrals vanish, but the cone congruence has a caustic at $\lambda=0$ where $B$ is singular.  Omitting the regular-congruence hypothesis would therefore make the saturation conclusion false in general.  In Bianchi I, Lemma~\ref{lem:fixedpregularity} supplies the required regular congruence automatically for each conserved covector; this is why Theorem~\ref{thm:onerayNCC} can start from a single complete geodesic.
\end{remark}

\begin{lemma}[Regularity of the fixed-covector congruence]
\label{lem:fixedpregularity}
Let $h(t)$ be smooth and positive definite on an interval $I$, and let $p\ne0$ be fixed.  The vector field determined by Eqs.~\eqref{eq:omegadef} and \eqref{eq:nullgeodesics} defines a smooth, affinely parametrized, twist-free null geodesic congruence on $I\times\mathbb R^n$.  It has no finite-$t$ focal collapse, and $B_p$ is continuous and finite on every compact subinterval of $I$.
\end{lemma}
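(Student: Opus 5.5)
The plan is to realize the congruence as the gradient field of the optical function $u_p$ of Eq.~\eqref{eq:opticalfunction} and read off every claimed property from that representation.

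\textbf{Step 1 (smooth null vector field).} Since $h$ is smooth and positive definite on $I$, $h^{-1}$ is smooth. For $p\ne0$ the quantity $h^{ij}p_ip_j$ is strictly positive, so $\omega_p=(h^{ij}p_ip_j)^{1/2}$ is a smooth positive function of $t$. Hence $u_p(t,x)=p_ix^i-\int_{t_0}^t\omega_p(s)\,ds$ is smooth on $I\times\mathbb R^n$, with $du_p=(-\omega_p,p_i)$. Raising the index gives $k^\mu=(\omega_p,h^{ij}p_j)$, which is exactly the vector field of Eq.~\eqref{eq:nullgeodesics}. By Eq.~\eqref{eq:eikonal}, $g^{\mu\nu}k_\mu k_\nu=0$ identically.

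\textbf{Step 2 (affinely geodesic and twist-free).} I would use the standard gradient argument. Because $\nabla_\mu k_\nu$ is symmetric,
\[
k^\mu\nabla_\mu k_\nu=k^\mu\nabla_\nu k_\mu=\tfrac12\nabla_\nu(k^\mu k_\mu)=0 .
\]
So the integral curves are affinely parametrized geodesics. The symmetry of $\nabla_\mu k_\nu$ also means the antisymmetric screen part of $B$, the twist, vanishes. The field is nowhere zero because $\omega_p>0$, so its integral curves foliate $I\times\mathbb R^n$ with no crossings; this is what a congruence means here.

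\textbf{Step 3 (no focal collapse; continuity and finiteness of $B_p$).} Use Eq.~\eqref{eq:Bmaster}: $B_p=\omega_p\,\mathsf P\mathsf K\mathsf P|_{q^\perp}$. On a compact $[t_1,t_2]\subset I$, continuity and positivity give bounds $0<c\le\omega_p\le C$, and $\mathsf K=\tfrac12h^{-1}\dot h$ is bounded and continuous. The direction $q=h^{-1}p/\omega_p$ and the projector $\mathsf P$ are continuous. The screen $q^\perp=\ker p$ is a fixed subspace by Eq.~\eqref{eq:screenkerp}, so one can choose a continuous $h$-orthonormal basis of it, for instance by Gram--Schmidt applied to a fixed basis of $\ker p$. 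Then $B_{AB}=\omega_pK_{ij}E_A^iE_B^j$ is continuous and bounded on $[t_1,t_2]$.

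Focal collapse would require $\Theta_p$ or $B_p$ to blow up. This is excluded in two ways. First, by Eq.~\eqref{eq:Thetadef}, $\Theta_p=d\ln(\mathcal V\omega_p)/d\lambda$ with $\mathcal V\omega_p$ bounded away from zero. Second, more geometrically, Eq.~\eqref{eq:geodesiccosmictime} shows that members of the congruence differ by constant spatial translations, so neighbouring generators never cross.

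\textbf{Step 4 (compact affine intervals).} Since $d\lambda=dt/\omega_p$ with $\omega_p$ bounded above and below on compact $t$-intervals, compact affine intervals correspond to compact $t$-subintervals of $I$. So continuity of $B_p$ holds in $\lambda$ as well, which is the form required by Theorem~\ref{thm:affinerigidity}.

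\textbf{Main obstacle.} I expect the main difficulty to be mostly bookkeeping rather than analysis: fixing a continuous screen frame, and interpreting ``no finite-$t$ focal collapse'' precisely. I would state that collapse means divergence of $B_p$ or a caustic of the map from initial data to points, and then dispose of both using the translation structure of Step 3.
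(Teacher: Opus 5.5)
Your proposal is correct and takes essentially the same route as the paper. The smooth optical function $u_p$ gives an affine, null, twist-free gradient congruence; the generators $x^i(t;x_0)=x_0^i+\int_{t_0}^t h^{ij}p_j/\omega_p\,ds$ are constant spatial translates of one another, which the paper expresses through Jacobi fields $(0,v_A)$ with positive-definite Gram matrix $h(t)(v_A,v_B)$; and compactness bounds $\omega_p$, $\mathsf K$, the projector, and hence $B_p$. Your separate argument via $\Theta_p=d\ln(\mathcal V\omega_p)/d\lambda$ is redundant, and as worded it needs a bound on the derivative of $\ln(\mathcal V\omega_p)$ (available from smoothness on compact intervals), not just a lower bound on $\mathcal V\omega_p$.
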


\begin{proof}
The optical function \eqref{eq:opticalfunction} is smooth and solves the eikonal equation, so its raised gradient is affine null and twist-free.  With $t$ as parameter, a generator starting at $x_0$ is
\begin{equation}
 x^i(t;x_0)=x_0^i+
 \int_{t_0}^{t}\frac{h^{ij}(s)p_j}{\omega_p(s)}\,ds.
 \label{eq:translatedgenerators}
\end{equation}
Hence $\partial x(t;x_0)/\partial x_0=\mathsf 1$: distinct generators remain related by constant spatial translations.  A fixed basis $v_A$ of $\ker p$ gives Jacobi fields $(0,v_A)$ whose Gram matrix $h(t)(v_A,v_B)$ remains positive definite.  The screen therefore never collapses at finite $t$.  On compact subintervals, smooth positive definiteness bounds $h$, $h^{-1}$, $\omega_p$, $K$, the projector, and $B_p$.
\end{proof}

\begin{lemma}[Cosmic-time traversal]
\label{lem:traversal}
For an inextendible fixed-$p$ generator, the maximal $t$-domain is $I$, and
\begin{equation}
 \lambda(t)=\lambda_0+
 \int_{t_0}^{t}\frac{ds}{\omega_p(s)}
 \label{eq:affinetraversal}
\end{equation}
maps $I$ diffeomorphically onto the maximal affine interval.  The generator is two-sided complete precisely when both endpoint integrals of $1/\omega_p$ diverge.  In particular, a complete generator traverses every $t\in I$.
\end{lemma}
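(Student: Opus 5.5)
The plan is to treat the generator in cosmic-time parametrization and reduce everything to a one-dimensional ODE. By Lemma~\ref{lem:fixedpregularity}, the fixed-$p$ vector field is smooth on $I\times\mathbb R^n$. Its integral curves satisfy $dt/d\lambda=\omega_p(t)>0$ and $dx^i/dt=h^{ij}p_j/\omega_p$ by Eqs.~\eqref{eq:nullgeodesics} and \eqref{eq:geodesiccosmictime}. Because $dt/d\lambda>0$, $t$ is strictly monotone along any generator, so $t$ is a legitimate parameter wherever the generator is defined. In that parameter the spatial equation is an explicit quadrature, Eq.~\eqref{eq:translatedgenerators}, whose integrand is smooth on all of $I$. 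Hence for any $t_0\in I$ and initial point $x_0$ the solution $x(t;x_0)$ exists for every $t\in I$. The curve $t\mapsto(t,x(t;x_0))$ therefore covers all of $I$ and cannot be continued beyond $I$, since the spacetime is $I\times\mathbb R^n$. This shows that the maximal $t$-domain of the inextendible generator is exactly $I$.

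Next I will recover the affine parameter from $d\lambda=dt/\omega_p(t)$, which gives Eq.~\eqref{eq:affinetraversal}. Since $1/\omega_p$ is smooth and strictly positive on $I$, the map $t\mapsto\lambda(t)$ is smooth with positive derivative. It is therefore a diffeomorphism of $I=(t_-,t_+)$ onto its image, which is an open interval $(\lambda_-,\lambda_+)$ with
\begin{equation*}
\lambda_\pm=\lambda_0\pm\int_{t_0}^{t_\pm}\frac{ds}{\omega_p(s)}\cdot(\pm1),
\end{equation*}
understood as improper integrals in $[0,\infty]$ measured from $t_0$. Inextendibility in $t$ together with monotonicity then identifies this image with the maximal affine domain: an extension in $\lambda$ would extend the curve in $t$ beyond $I$. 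This is the same reasoning as the remark after Eq.~\eqref{eq:geodesiccosmictime}, that a finite affine endpoint can occur only as $t$ approaches an endpoint of $I$.

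Two-sided completeness means $(\lambda_-,\lambda_+)=\mathbb R$. That holds exactly when both one-sided improper integrals $\int_{t_0}^{t_+}ds/\omega_p$ and $\int_{t_-}^{t_0}ds/\omega_p$ diverge. When $I$ has an infinite endpoint the relevant integral may or may not diverge, depending on how $\omega_p$ grows, so no a priori conclusion is drawn, which matches the "precisely when" formulation. The final assertion, that a complete generator traverses every $t\in I$, is then immediate, because the bijection maps $\mathbb R$ onto $I$.

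The only delicate point is justifying that the maximal affine interval is exactly the image of $I$. This needs two facts. First, if $t$ stayed inside a compact subinterval of $I$ on a finite affine end, the solution could be continued past that end by the compact-subinterval bounds on $\omega_p$ and $h^{-1}$ (standard ODE escape lemma), so this cannot happen at a genuine endpoint. Second, conversely, the curve does reach all of $I$, as shown above. I expect to handle this with the standard escape-from-compact-sets argument applied to the autonomous system in $(t,x)$.
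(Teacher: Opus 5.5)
Your proposal is correct and follows essentially the paper's argument: the explicit quadrature \eqref{eq:translatedgenerators}, whose integrand is smooth on all of $I$, gives the full $t$-domain; strict positivity of $1/\omega_p$ makes \eqref{eq:affinetraversal} a diffeomorphism; and standard ODE continuation identifies the endpoints of the affine image with those of $I$. The only blemish is notational: your display for $\lambda_\pm$ should read $\lambda_+=\lambda_0+\int_{t_0}^{t_+}ds/\omega_p$ and $\lambda_-=\lambda_0-\int_{t_-}^{t_0}ds/\omega_p$, with both integrals in $[0,\infty]$.
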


\begin{proof}
The right-hand side of Eq.~\eqref{eq:geodesiccosmictime} is smooth and locally Lipschitz on every compact subinterval of $I$; Eq.~\eqref{eq:translatedgenerators} extends the solution throughout $I$.  Since $\omega_p>0$, Eq.~\eqref{eq:affinetraversal} is strictly increasing.  Standard ODE continuation and the explicit integral identify the endpoints of its image with the two endpoints of $I$, giving the stated equivalence.
\end{proof}

\begin{corollary}[Bianchi I optical saturation]
\label{cor:bianchirigidity}
If a fixed-$p$ Bianchi I generator is two-sided complete and satisfies $\underline I_p\ge0$, then
\begin{equation}
 B_p(t)=0,\qquad
 R_{\mu\nu}k^\mu k^\nu=0,\qquad
 \dot h(t)|_{\ker p\times\ker p}=0
 \quad\text{for every }t\in I.
 \label{eq:Bzeroalltimes}
\end{equation}
\end{corollary}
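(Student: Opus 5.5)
The plan is to chain the three preceding results, using the Bianchi I structure only to check hypotheses and to translate the conclusion into cosmic time. We apply Theorem~\ref{thm:affinerigidity} to the homogeneous fixed-$p$ congruence. Lemma~\ref{lem:fixedpregularity} supplies its hypotheses: the raised gradient of the optical function \eqref{eq:opticalfunction} is a smooth, affinely geodesic, twist-free null congruence on all of $I\times\mathbb R^n$, hence on an open neighbourhood of the given generator. Its screen endomorphism $B_p$ is finite and continuous on every compact subinterval of $I$.

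To transfer this regularity from compact $t$-intervals to compact affine intervals we use Lemma~\ref{lem:traversal}. Because the generator is two-sided complete, $\lambda(t)$ in Eq.~\eqref{eq:affinetraversal} is a diffeomorphism of $I$ onto $\mathbb R$. Every compact affine interval is therefore the image of a compact subinterval of $I$, on which $B_p$ is bounded and continuous. This mapping is the only point that needs care: the theorem's regularity hypothesis is phrased in affine parameter, while Lemma~\ref{lem:fixedpregularity} is phrased in $t$.

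With the hypotheses verified, the consequence \eqref{eq:theoremcutoffsaturation} of Theorem~\ref{thm:affinerigidity}, applied under $\underline I_p\ge0$, gives $B_p=0$ and $R_{\mu\nu}k^\mu k^\nu=0$ at every affine parameter $\lambda\in\mathbb R$. By the surjectivity in Lemma~\ref{lem:traversal}, the complete generator passes through every $t\in I$. The affine statement therefore becomes $B_p(t)=0$ and $R_{\mu\nu}k^\mu k^\nu=0$ for every $t\in I$.

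The last identity follows from Eq.~\eqref{eq:Bcomponents}, which gives $B_{AB}=\omega_pK_{ij}E_A^iE_B^j$ with $\omega_p>0$. Here $\{E_A\}$ is a basis of $q^\perp$, and $q^\perp=\ker p$ by Eq.~\eqref{eq:screenkerp}. Hence $B_p(t)=0$ is equivalent to $K_{ij}=\tfrac12\dot h_{ij}$ vanishing on $\ker p\times\ker p$, which is exactly the instantaneous equivalence noted in Definition~\ref{def:saturation}. This yields $\dot h(t)|_{\ker p\times\ker p}=0$ for all $t\in I$.
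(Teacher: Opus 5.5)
Your proposal is correct and follows the same route as the paper. Lemma~\ref{lem:fixedpregularity} supplies the hypotheses of Theorem~\ref{thm:affinerigidity}, Lemma~\ref{lem:traversal} carries the pointwise conclusion to every $t\in I$, and Eqs.~\eqref{eq:Bmaster} and \eqref{eq:screenkerp} give the equivalence with $\dot h|_{\ker p\times\ker p}=0$. Your explicit transfer of the regularity hypothesis from compact $t$-intervals to compact affine intervals spells out a step the paper leaves implicit; the paper's extra remark that homogeneity makes the conclusion generator-independent is not needed for the statement as posed.
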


\begin{proof}
Lemma~\ref{lem:fixedpregularity} supplies the hypotheses of Theorem~\ref{thm:affinerigidity}.  Lemma~\ref{lem:traversal} makes the conclusion valid at every cosmic time, and spatial homogeneity makes it independent of the chosen generator.  Equations~\eqref{eq:Bmaster} and \eqref{eq:screenkerp} then give the final equivalence.
\end{proof}

Theorem~\ref{thm:affinerigidity} deliberately uses a geometric independent-cutoff lower limit.  Verch uses a formally similar independent two-end liminf in a two-dimensional Minkowski quantum-field-theory ANEC setting \cite{Verch2000}; that restricted precedent is not a universal curved-spacetime theorem.  If the two one-sided improper curvature integrals converge separately, $I_p^{\mathrm{ind}}$ exists and the theorem reduces to Eq.~\eqref{eq:theoremindependentsum}. A symmetric principal value, a weighted average, or a finite-segment smeared inequality is a different hypothesis and does not by itself imply Eq.~\eqref{eq:theoremcutoffsaturation}.

\section{From Optical Saturation to Cosmological Rigidity}
\label{Sec:SaturationRigidity}

The affine theorem becomes physically restrictive because the screen associated with a conserved covector is the fixed hyperplane $\ker p$.  For $v,w\in\ker p$, Eqs.~\eqref{eq:Bmaster} and \eqref{eq:screenkerp} give
\begin{equation}
    B_p=0
    \quad\Longleftrightarrow\quad
    K(v,w)=0\ \text{for all }v,w\in\ker p
    \quad\Longleftrightarrow\quad
    \frac{d}{dt}h(t)(v,w)=0.
    \label{eq:hyperplanefreezing}
\end{equation}
Thus one saturated null direction freezes the complete metric induced on the codimension-one spatial hyperplane transverse to $p$, including its shear degrees of freedom.  The following invariant statement is the starting point for the global classification.  We use the convention
\begin{equation}
  p\mathbin{\odot}\alpha:=p\otimes\alpha+\alpha\otimes p.
  \label{eq:symmetricproduct}
\end{equation}

\begin{lemma}[Single saturated hyperplane]
\label{lem:singlehyperplane}
Let $V$ be the fixed spatial translation space, let $p\in V^*$ be nonzero, and let $S$ be a symmetric bilinear form on $V$.  Then
\begin{equation}
 S|_{\ker p\times\ker p}=0
 \quad\Longleftrightarrow\quad
 S=p\mathbin{\odot}\alpha
 \quad\text{for a unique }\alpha\in V^*.
 \label{eq:singlehyperplane}
\end{equation}
\end{lemma}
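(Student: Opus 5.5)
The reverse implication is immediate, so the plan is to verify it first and then spend the work on existence and uniqueness. If $S=p\mathbin{\odot}\alpha$, then for $v,w\in\ker p$ we have $S(v,w)=p(v)\alpha(w)+\alpha(v)p(w)=0$.

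For the forward direction I would work in an adapted basis. Choose a vector $e\in V$ with $p(e)=1$, so that $V=\mathbb{R}e\oplus\ker p$ and every $v\in V$ decomposes as $v=p(v)e+v_0$ with $v_0\in\ker p$. I would then define the covector
\begin{equation}
 \alpha(v):=S(e,v)-\tfrac12 S(e,e)\,p(v).
\end{equation}
To check $S=p\mathbin{\odot}\alpha$, expand by bilinearity:
\begin{equation}
 S(v,w)=p(v)p(w)S(e,e)+p(v)S(e,w_0)+p(w)S(v_0,e)+S(v_0,w_0).
\end{equation}
The last term vanishes by hypothesis. On the other side, $\alpha(w)=S(e,w_0)+\tfrac12 p(w)S(e,e)$, because $S(e,w)=p(w)S(e,e)+S(e,w_0)$. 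Substituting this and the corresponding expression for $\alpha(v)$ shows that $p(v)\alpha(w)+\alpha(v)p(w)$ reproduces the remaining three terms exactly. This is a routine bookkeeping check.

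Uniqueness is the only step that needs care; it is where a stray factor or a characteristic-two subtlety could hide, though over $\mathbb{R}$ there is none. Suppose $p\mathbin{\odot}\beta=0$ with $\beta:=\alpha-\alpha'$. Evaluating at $(e,e)$ gives $2\beta(e)=0$. Evaluating at $(e,w_0)$ with $w_0\in\ker p$ gives $\beta(w_0)=0$. Hence $\beta=0$.

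An equivalent, more invariant argument is also available. The linear map $\alpha\mapsto p\mathbin{\odot}\alpha$ from $V^*$ to the symmetric forms is injective by the computation above. Its image lies in the annihilator of $\ker p\times\ker p$, and both spaces have dimension $n$: the restriction map $\mathrm{Sym}^2V^*\to\mathrm{Sym}^2(\ker p)^*$ is surjective, so its kernel has dimension $n(n+1)/2-(n-1)n/2=n$. The two arguments give the same conclusion, and I would present the explicit construction while mentioning the dimension count as a check.
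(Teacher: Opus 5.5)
Your proposal is correct and follows the paper's proof essentially line for line. The paper also picks $u$ with $p(u)=1$, defines $\alpha(v)=S(u,v)-\tfrac12 S(u,u)\,p(v)$, verifies $S=p\mathbin{\odot}\alpha$ by bilinear expansion, and proves uniqueness by evaluating $p\mathbin{\odot}\beta$ on $(u,v_\perp)$ and $(u,u)$; your dimension count is a valid supplementary check that the paper does not need.
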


\begin{proof}
Choose $u\in V$ with $p(u)=1$ and define
\begin{equation}
 \alpha(v):=S(u,v)-\frac12 S(u,u)p(v).
 \label{eq:alphaexplicit}
\end{equation}
Writing $v=v_\perp+p(v)u$ and $w=w_\perp+p(w)u$, with $v_\perp,w_\perp\in\ker p$, and expanding bilinearly gives $S(v,w)=(p\mathbin{\odot}\alpha)(v,w)$.  Conversely, $p\mathbin{\odot}\alpha$ vanishes on $\ker p\times\ker p$.  If $p\mathbin{\odot}\beta=0$, evaluation on $(u,v_\perp)$ gives $\beta(v_\perp)=0$, and evaluation on $(u,u)$ gives $\beta(u)=0$; hence $\beta=0$.
\end{proof}

Applied to $S(t)=\dot h(t)$, Lemma~\ref{lem:singlehyperplane} gives the coordinate-free form $\dot h=p\mathbin{\odot}\alpha(t)$.  In constant spatial coordinates with $p=dx^n$, integration gives the equivalent block form
\begin{equation}
    h(t)=
    \begin{pmatrix}
        C & b(t)\\
        b(t)^{\mathsf T} & c(t)
    \end{pmatrix},
    \qquad
    c(t)-b(t)^{\mathsf T}C^{-1}b(t)>0,
    \label{eq:saturatedblockform}
\end{equation}
where $C$ is a constant positive-definite $(n-1)\times(n-1)$ matrix.  Time dependence can survive only in the mixed and normal components.

\subsection{Single-direction rigidity under pointwise null convergence}
\label{Sec:OneRayRigidity}

\begin{theorem}[Single-direction rigidity under pointwise null convergence]
\label{thm:onerayNCC}
Let
\begin{equation}
 ds^2=-dt^2+h_{ij}(t)\,dx^i dx^j,
 \qquad t\in I,
 \label{eq:oneraymetric}
\end{equation}
be a smooth Bianchi I metric on $I\times\mathbb R^n$, $n\ge2$, with $h(t)$ positive definite.  Suppose the pointwise null convergence condition
\begin{equation}
 R_{\mu\nu}\ell^\mu\ell^\nu\ge0
 \qquad\text{for every null vector }\ell
 \label{eq:onerayNCC}
\end{equation}
holds.  If the spacetime contains a null geodesic that is complete in both affine directions, then
\begin{equation}
 I=\mathbb R,
 \qquad
 h(t)=h_0.
 \label{eq:onerayconclusion}
\end{equation}
Hence the Cartesian universal cover is Minkowski spacetime after a constant spatial linear transformation.  The same conclusion holds on the universal cover of a flat spatial quotient.
\end{theorem}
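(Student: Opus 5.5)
The plan is to first extract optical saturation from Theorem~\ref{thm:affinerigidity} and then use the full strength of pointwise NCC to kill the surviving degrees of freedom in the block form~\eqref{eq:saturatedblockform}.

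\textbf{Reduction to a saturated covector.} Let $\gamma$ be the given two-sided complete null geodesic, with conserved covector $p$. If $p=0$, then Eq.~\eqref{eq:pconserved} gives $dx^i/d\lambda=0$, and the null condition forces $dt/d\lambda=0$, which is impossible for a nonzero tangent. Hence $p\neq0$, and $\gamma$ is a generator of the fixed-$p$ congruence. Pointwise NCC makes the integrand in Eq.~\eqref{eq:Icutoff} nonnegative, so $I_p(U,V)\ge0$ for all cutoffs and therefore $\underline I_p\ge0$. Corollary~\ref{cor:bianchirigidity} then gives $B_p=0$ and $R_{\mu\nu}k^\mu k^\nu=0$ at every $t\in I$, together with $\dot h|_{\ker p\times\ker p}=0$. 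Lemma~\ref{lem:singlehyperplane} and Eq.~\eqref{eq:saturatedblockform} then put $h$ in block form with constant $C$ in coordinates where $p=dx^n$. The remaining data are the mixed vector $b(t)$ and the positive Schur factor $a(t)=c-b^{\mathsf T}C^{-1}b$.

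\textbf{Killing the mixed row.} Every future null vector has the form $\ell=(1,v)$ with $h(v,v)=1$. Pointwise NCC therefore says that the spatial quadratic form
\begin{equation*}
Q_t(v):=R_{00}\,h(v,v)+R_{ij}v^iv^j
\end{equation*}
is positive semidefinite on $V$ for each $t$. Saturation says that $Q_t$ vanishes on the direction $q_p=h^{-1}p$. A positive semidefinite form that vanishes on a vector has that vector in its radical, so $Q_t(q_p,w)=0$ for all $w$. This gives $n$ scalar equations per time, and I would evaluate them with the Bianchi I Ricci formulas
\begin{equation*}
R_{00}=-\operatorname{tr}\dot{\mathsf K}-\operatorname{tr}\mathsf K^2,\qquad
R_{ij}=\dot K_{ij}+(\operatorname{tr}\mathsf K)K_{ij}-2K_{ik}\mathsf K^k{}_j .
\end{equation*}
Here $\dot h$ is supported only in the mixed row and the normal entry. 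I expect this computation to be the main obstacle. The goal is to show that the $w\in\ker p$ components force $\dot b=0$, or at least $\frac{d}{dt}(C^{-1}b)=0$, so that a constant linear change of the $\ker p$ coordinates removes $b$. My first attempt would be to work in the $h$-orthonormal adapted frame of $\ker p$ and its normal. In that frame the mixed components appear quadratically in the $\ker p$ diagonal block of $Q_t$, which is then compared with the vanishing radical row. Failing that, I would test $Q_t$ on $q_p+\epsilon w$ and extract a term like $-|\dot\beta|^2$ with a sign that contradicts semidefiniteness unless $\dot\beta=0$.

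\textbf{The remaining scalar and completeness.} Once $b$ is eliminated, the metric becomes $-dt^2+C_{AB}dy^Ady^B+s(t)^2dz^2$ with $s=\sqrt a>0$. The $(t,z)$ factor is two-dimensional and so contributes no null Ricci in the $z$ direction. Taking $\ell$ tangent to $\ker p$ instead gives $R_{\mu\nu}\ell^\mu\ell^\nu=-\ddot s/s$, so NCC forces $s$ to be concave on $I$. For $p=dz$ one has $\omega_p=1/s$. Lemma~\ref{lem:traversal} then says completeness is equivalent to the divergence of $\int s\,dt$ at both ends of $I$. A concave positive function on a bounded end of $I$ is bounded above there, so the integral would converge; hence $I=\mathbb R$. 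A concave function that is positive on all of $\mathbb R$ is constant, so $h=h_0$. A constant linear transformation then brings $h_0$ to $\delta$, giving Minkowski on the Cartesian cover. The argument is local in space, so it applies verbatim to the universal cover of any flat spatial quotient.
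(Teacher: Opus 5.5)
There is a genuine gap in your ``killing the mixed row'' step. The target you set there is $\dot b=0$, obtained from the radical of $Q_t$ and possibly its $\ker p$ block. That does not follow from pointwise NCC plus saturation, so no pointwise computation can deliver it.

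Take $C=\mathsf 1$ and $s=\sqrt{c-|b|^2}$. The frame $\{\partial_A,q\}$ is $h$-orthonormal, with $K_{AB}=0$, $K_{Aq}=v_A:=\dot b_A/(2s)$ and $K_{qq}=\dot s/s$. The Ricci formulas then give:
\begin{itemize}
\item $Q_t(q,q)=0$;
\item $Q_t(q,\partial_A)=\dot v_A$;
\item a $\ker p$ block $-(\ddot s/s+2|v|^2)\mathsf 1-2vv^{\mathsf T}$.
\end{itemize}
The radical condition therefore yields only $\dot v=0$, i.e.\ $\dot b=2s\,v$ with $v$ a constant vector. The quadratic $v$-terms in the $\ker p$ block are absorbed by $-\ddot s/s$, so positivity only requires $\ddot s+4|v|^2s\le0$.

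A concrete example: take $n=2$, $v_0>0$, $s=\sin(2v_0t)$ and $b=-\cos(2v_0t)$ on $I=(0,\pi/(2v_0))$. Then $v\equiv v_0$ and $Q_t\equiv0$; the chart is locally anti-de Sitter. NCC holds and saturation along $p=dz$ holds, yet $\dot b\neq0$. Your fallback of testing $Q_t$ on $q_p+\epsilon w$ only reproduces $Q_t(q_p,w)=0$; no $-|\dot\beta|^2$ term appears.

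What is missing is a second use of two-sided completeness, to kill the constant $v$ before you eliminate $b$. Evaluating the $\ker p$ block on $v/|v|$ gives $\ddot s\le-4|v|^2s$, so $s$ is concave. Your own endpoint argument then applies: since $\omega_p=1/s$, a finite end of $I$ would make $\int s\,dt$ finite there, contradicting Lemma~\ref{lem:traversal}. This gives $I=\mathbb R$ and $s$ constant, and then $0=\ddot s\le-4|v|^2s$ forces $v=0$.

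The paper reaches the same point differently. It uses a Sturm comparison against $\sin[2|v|(t-\alpha)]$ to show that $I$ has finite length when $v\neq0$, and then applies the same concavity bound to obtain finite affine endpoints.

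Only once $v=0$ (hence $\dot b=0$) does your last paragraph apply as written. In your current ordering, completeness enters only after $b$ has already been removed, so the argument does not close.
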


\begin{proof}
Let $p\ne0$ be the conserved spatial covector of the complete geodesic.  Pointwise NCC makes every finite cutoff integral $I_p(U,V)$ nonnegative, hence $\underline I_p\ge0$.  Corollary~\ref{cor:bianchirigidity} therefore gives
\begin{equation}
 B_p=0,
 \qquad
 R_{\mu\nu}k^\mu k^\nu=0,
 \qquad
 \dot h|_{\ker p\times\ker p}=0
 \quad\text{on }I.
 \label{eq:oneraysaturation}
\end{equation}
Choose constant spatial coordinates with $p=dz$.  By Eq.~\eqref{eq:saturatedblockform} and a further constant linear transformation on $\ker p$, we may take $C=\mathsf 1_{n-1}$.  Writing
\begin{equation}
 a^2(t):=c(t)-|b(t)|^2>0,
 \label{eq:onerayschur}
\end{equation}
the spatial metric becomes
\begin{equation}
 h=\sum_{A=1}^{n-1}\bigl(dx^A+b_A(t)\,dz\bigr)^2+a^2(t)\,dz^2.
 \label{eq:oneraynormalform}
\end{equation}
The Schur inverse gives
\begin{equation}
 \omega_p=\frac1a,
 \qquad
 q=\frac1a\left(\partial_z-b^A\partial_A\right),
 \qquad
 d\lambda=a(t)\,dt.
 \label{eq:onerayaffine}
\end{equation}
Thus $E_A:=\partial_A$ together with $q$ is an orthonormal spatial frame.  Define
\begin{equation}
 v_A:=\frac{\dot b_A}{2a},
 \qquad
 \kappa:=\frac{\dot a}{a}.
 \label{eq:onerayvkappa}
\end{equation}
In the frame $\{E_A,q\}$ the extrinsic curvature has the form
\begin{equation}
 K=
 \begin{pmatrix}
  0 & v\\
  v^{\mathsf T} & \kappa
 \end{pmatrix},
 \qquad
 \operatorname{tr}K=\kappa,
 \qquad
 \operatorname{tr}(K^2)=2|v|^2+\kappa^2.
 \label{eq:onerayK}
\end{equation}

For the curvature convention in Eq.~\eqref{eq:curvatureconvention}, the only nonzero Christoffel symbols of a general Bianchi I metric are
\begin{equation}
 \Gamma^0{}_{ij}=K_{ij},
 \qquad
 \Gamma^i{}_{0j}=\Gamma^i{}_{j0}=K^i{}_j,
 \label{eq:onerayChristoffel}
\end{equation}
so
\begin{equation}
 R_{00}=-\frac{d}{dt}\operatorname{tr}K-\operatorname{tr}(K^2),
 \qquad
 R_{0i}=0,
 \qquad
 R_{ij}=\dot K_{ij}+(\operatorname{tr}K)K_{ij}-2K_i{}^mK_{mj}.
 \label{eq:onerayRicciGeneral}
\end{equation}
Here the last formula is written in the fixed coordinate basis.  Since
\begin{equation}
 \dot q=-2v^A E_A-\kappa q,
 \label{eq:onerayqdot}
\end{equation}
the moving-frame terms give
\begin{equation}
 R_{00}=-\dot\kappa-2|v|^2-\kappa^2,
 \qquad
 R_{AB}=-2v_Av_B,
 \qquad
 R_{Aq}=\dot v_A.
 \label{eq:onerayRicciAdapted}
\end{equation}
The selected affine tangent is $k^\mu=a^{-1}(1,q)$, so Eq.~\eqref{eq:oneraysaturation} and $R_{0q}=0$ imply
\begin{equation}
 R_{00}+R_{qq}=0.
 \label{eq:onerayqqzero}
\end{equation}

For spatial frame indices $I,J\in\{1,\ldots,n-1,q\}$ define
\begin{equation}
 N_{IJ}:=R_{00}\delta_{IJ}+R_{IJ}.
 \label{eq:onerayNdef}
\end{equation}
For any spatial vector $X$, the vector $\ell=|X|\partial_t+X$ is null and
$R_{\mu\nu}\ell^\mu\ell^\nu=X^{\mathsf T}NX$.  Hence Eq.~\eqref{eq:onerayNCC} is equivalent to $N\succeq0$.  Using $\dot\kappa+\kappa^2=\ddot a/a$, Eqs.~\eqref{eq:onerayRicciAdapted} and \eqref{eq:onerayqqzero} give
\begin{equation}
 N=
 \begin{pmatrix}
  -\left(\dfrac{\ddot a}{a}+2|v|^2\right)\mathsf 1_{n-1}-2vv^{\mathsf T} & \dot v\\[4pt]
  \dot v^{\mathsf T} & 0
 \end{pmatrix}
 \succeq0.
 \label{eq:onerayNmatrix}
\end{equation}
A positive-semidefinite symmetric form with $N_{qq}=0$ has $N(q,\cdot)=0$; equivalently, the Cauchy--Schwarz inequality for $N$ gives $|N(q,X)|^2\le N(q,q)N(X,X)=0$.  Thus
\begin{equation}
 \dot v=0.
 \label{eq:onerayvconstant}
\end{equation}

Suppose first that the resulting constant vector $v$ is nonzero.  Evaluating the upper block of Eq.~\eqref{eq:onerayNmatrix} on $e=v/|v|$ yields
\begin{equation}
 \ddot a+4|v|^2 a\le0.
 \label{eq:onerayoscillator}
\end{equation}
Set $\sigma:=2|v|>0$.  If $I$ contained a closed interval $[\alpha,\beta]$ of length $\pi/\sigma$, then with $\phi(t)=\sin[\sigma(t-\alpha)]$,
\begin{equation}
 0\ge\int_\alpha^\beta(\ddot a+\sigma^2a)\phi\,dt
   =\sigma\bigl[a(\alpha)+a(\beta)\bigr]>0,
 \label{eq:oneraysturm}
\end{equation}
a contradiction.  Hence $I$ has finite length.  Equation~\eqref{eq:onerayoscillator} also gives $\ddot a<0$, so $a$ is concave and lies below each of its tangent lines.  It is therefore bounded above on the finite interval $I$.  But Eq.~\eqref{eq:onerayaffine} then makes both affine endpoint integrals of the selected geodesic finite, contradicting two-sided completeness.  Thus $v=0$, and Eq.~\eqref{eq:onerayvkappa} gives $\dot b=0$.

With $v=0$, Eq.~\eqref{eq:onerayNmatrix} reduces to $-(\ddot a/a)\mathsf 1_{n-1}\succeq0$, hence $\ddot a\le0$.  If $I$ had a finite endpoint, concavity would bound $a$ above near that endpoint and Eq.~\eqref{eq:onerayaffine} would again give a finite affine endpoint.  Therefore $I=\mathbb R$.  Finally, a positive concave function on all of $\mathbb R$ is constant: if $\dot a(t_0)>0$, its tangent-line upper bound becomes negative as $t\to-\infty$, while if $\dot a(t_0)<0$ it becomes negative as $t\to+\infty$.  Hence $\dot a\equiv0$.  Equation~\eqref{eq:oneraynormalform} now has constant $a$ and $b$, so $h(t)=h_0$.  A constant positive-definite spatial form is Euclidean after a constant linear transformation, completing the proof.
\end{proof}

The pointwise hypothesis in Theorem~\ref{thm:onerayNCC} is essential.  Corollary~\ref{cor:bianchirigidity} uses only the cutoff-stable condition along the selected complete direction and yields Eq.~\eqref{eq:oneraysaturation}; it gives no pointwise sign for the null Ricci contraction in the other instantaneous null directions.  Without that sign one cannot infer $N\succeq0$ in Eq.~\eqref{eq:onerayNmatrix}, and the step from $N_{qq}=0$ to $\dot v=0$ is unavailable.  The following classification therefore remains the relevant equality geometry for the weaker cutoff-stable averaged regime.

\subsection{Global saturation geometry}
\label{Sec:GlobalSaturation}

\begin{proposition}[Two global saturated lines]
\label{prop:twolines}
Let $p,r\in V^*$ be linearly independent fixed covectors.  Their two lines are saturated for every $t\in I$ if and only if there is a scalar function $f$ such that
\begin{equation}
  \dot h(t)=f(t)p\mathbin{\odot}r.
  \label{eq:twolinevelocity}
\end{equation}
Equivalently, after fixing $t_0$ and taking $\Phi(t_0)=0$,
\begin{equation}
  h(t)=h_0+\Phi(t)p\mathbin{\odot}r,
  \qquad h_0=h(t_0),\qquad \dot\Phi=f,
  \label{eq:twolinefamily}
\end{equation}
on precisely the subinterval on which this form remains positive definite.
\end{proposition}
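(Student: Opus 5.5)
The plan is to reduce the statement to pure linear algebra at each fixed time, using Lemma~\ref{lem:singlehyperplane} twice, and then integrate in $t$. By Definition~\ref{def:saturation} and Eq.~\eqref{eq:hyperplanefreezing}, the two lines $[p]$ and $[r]$ are saturated at time $t$ exactly when $S:=\dot h(t)$ vanishes on both $\ker p\times\ker p$ and $\ker r\times\ker r$. The ``if'' direction is immediate: $p\mathbin{\odot}r$ vanishes on $\ker p\times\ker p$ because every term contains a factor $p(v)$ or $p(w)$, and by symmetry it also vanishes on $\ker r\times\ker r$.

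For the ``only if'' direction, I will apply Lemma~\ref{lem:singlehyperplane} to $p$ and write $S=p\mathbin{\odot}\alpha$ with a unique $\alpha$. Next I restrict to $\ker r$. Since $p$ and $r$ are independent, $p|_{\ker r}\neq0$. The condition $(p\mathbin{\odot}\alpha)|_{\ker r\times\ker r}=0$ is then a statement about the symmetric product of two covectors on the space $W=\ker r$, with $\bar p=p|_W\ne0$. I claim this forces $\alpha|_W=0$. To see this, take $u\in W$ with $p(u)=1$ and evaluate on $(u,u)$, which gives $2\alpha(u)=0$. Then evaluate on $(u,w)$ with $w\in W\cap\ker p$, which gives $\alpha(w)+p(w)\alpha(u)=\alpha(w)=0$. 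The condition $\dim V\ge2$ guarantees $W\neq0$, so $W$ is spanned by $u$ and $W\cap\ker p$, and the claim follows. This is exactly the uniqueness step of Lemma~\ref{lem:singlehyperplane} carried out on $W$. Since $\alpha$ annihilates $\ker r$, we get $\alpha=f\,r$ for a unique scalar $f$, and so $S=f\,p\mathbin{\odot}r$.

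The remaining work is regularity and integration. Smoothness of $f(t)$ follows from an explicit formula: choose a fixed $u$ with $p(u)=0$ and $r(u)=1$ (possible by independence) and a fixed $u'$ with $p(u')=1$. Then $\dot h(t)(u,u')=f(t)$, so $f$ is smooth because $h$ is. Integrating from $t_0$ gives Eq.~\eqref{eq:twolinefamily} with $\Phi(t)=\int_{t_0}^t f$.

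Conversely, given any smooth $\Phi$, the family $h_0+\Phi\,p\mathbin{\odot}r$ is an admissible Bianchi I metric exactly where it is positive definite. On that set the first part shows both lines are saturated. The phrase ``precisely the subinterval'' means the connected component containing $t_0$ of the open set where $h_0+\Phi\,p\mathbin{\odot}r\succ0$, which I will note by continuity of eigenvalues.

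I expect the main obstacle to be the middle step, showing that $\alpha$ must be proportional to $r$. The point is to organize the evaluations so that independence of $p$ and $r$ is used precisely once. Everything else is routine bookkeeping.
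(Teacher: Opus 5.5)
Your proof is correct, and it takes a mildly different route from the paper's.

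The paper fixes a coframe with $e^1=p$ and $e^2=r$. It notes that saturation of $[p]$ kills every $\dot h_{ij}$ with $i,j\neq1$, and saturation of $[r]$ kills every one with $i,j\neq2$. Only $\dot h_{12}=\dot h_{21}=f$ survives.

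You stay basis-free instead. You first factor $\dot h=p\mathbin{\odot}\alpha$ using Lemma~\ref{lem:singlehyperplane}. You then show $\alpha$ annihilates $\ker r$ by rerunning the lemma's uniqueness argument on $W=\ker r$, so $\alpha=f\,r$.

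The underlying computation is the same. Your $u\in\ker r$ with $p(u)=1$ plays the role of the dual frame vector $e_1$, and $W\cap\ker p$ plays the role of $\operatorname{span}\{e_3,\dots,e_n\}$ in the paper's adapted frame.

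The paper's version is shorter and makes the count of surviving components immediate. Yours reuses the earlier lemma and isolates the single point where independence of $p$ and $r$ enters, namely $p|_{\ker r}\neq0$.

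You also supply two details the paper leaves implicit. The first is smoothness of $f$, via the fixed-vector formula $f(t)=\dot h(t)(u,u')$. The second is an explicit reading of ``precisely the subinterval'' as the connected component containing $t_0$ of the set where $h_0+\Phi\,p\mathbin{\odot}r$ is positive definite.
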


\begin{proof}
Choose a fixed coframe with $e^1=p$ and $e^2=r$.  Saturation of $p$ kills every component $\dot h_{ij}$ with $i,j\ne1$; saturation of $r$ kills every component with $i,j\ne2$.  Their intersection leaves only $\dot h_{12}=\dot h_{21}=f(t)$, proving Eq.~\eqref{eq:twolinevelocity}; its converse is immediate.  Integration gives Eq.~\eqref{eq:twolinefamily}.
\end{proof}

This form is projectively covariant.  If $p'=ap$ and $r'=br$ with $a,b\ne0$, then $p'\mathbin{\odot}r'=ab(p\mathbin{\odot}r)$, while $f'=f/(ab)$ and $\Phi'=\Phi/(ab)$.  The metric path and the pair of projective lines are unchanged.  The spectral description makes positivity and the four rays explicit.  Define the $h_0^{-1}$ inner products
\begin{equation}
 a=\langle p,p\rangle_{h_0^{-1}},\qquad
 b=\langle r,r\rangle_{h_0^{-1}},\qquad
 c=\langle p,r\rangle_{h_0^{-1}}.
 \label{eq:abc}
\end{equation}
Here $a,b>0$ and strict Cauchy--Schwarz gives $c^2<ab$.  The self-adjoint operator $A=h_0^{-1}(p\mathbin{\odot}r)$ has two nonzero eigenvalues
\begin{equation}
 \kappa_+=c+\sqrt{ab}>0,
 \qquad
 \kappa_-=c-\sqrt{ab}<0.
 \label{eq:kappapm}
\end{equation}
In a fixed $h_0$-orthonormal eigen-coframe $\{\vartheta^+,\vartheta^-,\vartheta^A\}$,
\begin{equation}
 h(t)=[1+\kappa_+\Phi(t)](\vartheta^+)^2
      +[1+\kappa_-\Phi(t)](\vartheta^-)^2
      +\sum_{A=3}^{n}(\vartheta^A)^2.
 \label{eq:spectralnormalform}
\end{equation}
Consequently
\begin{equation}
 -\frac1{\kappa_+}<\Phi(t)<-\frac1{\kappa_-},
 \qquad
 \frac{\det h(t)}{\det h_0}
 =(1+\kappa_+\Phi)(1+\kappa_-\Phi)
 =1+2c\Phi+(c^2-ab)\Phi^2.
 \label{eq:positivitydet}
\end{equation}
The two saturated lines can be represented by
\begin{equation}
 \pi_+=\sqrt{\kappa_+}\,\vartheta^+
       +\sqrt{-\kappa_-}\,\vartheta^- ,\qquad
 \pi_-=\sqrt{\kappa_+}\,\vartheta^+
       -\sqrt{-\kappa_-}\,\vartheta^- ,
 \qquad
 p\mathbin{\odot}r=\frac12\pi_+\mathbin{\odot}\pi_-.
 \label{eq:pipm}
\end{equation}
Although $\kappa_\pm$ and $\Phi$ separately depend on the normalization of the factorization, the products $1+\kappa_\pm\Phi$, the determinant ratio, and the positivity domain are geometric.

\begin{proposition}[No third line]
\label{prop:nothirdline}
If $T=p\mathbin{\odot}r\ne0$ with $p,r$ independent, the only covector lines satisfying $T|_{\ker s\times\ker s}=0$ are $[p]$ and $[r]$.
\end{proposition}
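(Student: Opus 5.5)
The plan is to reduce the problem to linear algebra on the fixed space $V$ via Lemma~\ref{lem:singlehyperplane}.

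First, suppose $s\ne0$ satisfies $T|_{\ker s\times\ker s}=0$. By Lemma~\ref{lem:singlehyperplane} applied with $s$ in place of $p$, there is a unique $\beta\in V^*$ with
\[
p\mathbin{\odot}r=s\mathbin{\odot}\beta .
\]
So it suffices to show that any factorization of $T$ as a symmetric product has $s$ proportional to $p$ or to $r$.

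Second, I would compare ranks and images. Viewing $T$ as a map $V\to V^*$, the form $p\mathbin{\odot}r$ with $p,r$ independent has image exactly $\operatorname{span}\{p,r\}$. Indeed $T(v)=p(v)r+r(v)p$, and $(p(v),r(v))$ ranges over all of $\mathbb R^2$. Thus $s\mathbin{\odot}\beta$ also has rank two, which forces $s,\beta$ to be independent with $\operatorname{span}\{s,\beta\}=\operatorname{span}\{p,r\}$. In particular $s=xp+yr$ for some scalars $x,y$, not both zero.

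Third, I would restrict to the two-dimensional quotient. Choose vectors $e_1,e_2\in V$ dual to $p,r$, so that $p(e_1)=r(e_2)=1$ and $p(e_2)=r(e_1)=0$. Evaluating on these vectors identifies $T$ with the quadratic form $2XY$ on $\mathbb R^2$. Write $s=xp+yr$ and $\beta=x'p+y'r$; the $\ker p\cap\ker r$ components drop out because both sides vanish there. Then $s\mathbin{\odot}\beta$ corresponds to $2(xX+yY)(x'X+y'Y)$. The identity $XY=(xX+yY)(x'X+y'Y)$ is a factorization in the polynomial ring $\mathbb R[X,Y]$, which is a UFD. Unique factorization into linear factors then forces $xX+yY$ to be a scalar multiple of $X$ or of $Y$. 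Hence $s\in[p]$ or $s\in[r]$. If one prefers to avoid the UFD, compare coefficients directly: $xx'=0$ and $yy'=0$ with $xy'+x'y=1$ give the same conclusion.

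Finally, the converse holds immediately: $[p]$ and $[r]$ do satisfy the condition, as already noted in Proposition~\ref{prop:twolines}.

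The step needing the most care is the second, namely justifying that $s$ lies in $\operatorname{span}\{p,r\}$ rather than having a component outside it. The image argument handles this cleanly. Alternatively, one can evaluate $T$ on $\ker p\cap\ker r$ paired with arbitrary vectors: both sides vanish there only if $s$ and $\beta$ annihilate that subspace.
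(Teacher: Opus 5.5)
Your proof is correct and follows the paper's route: Lemma~\ref{lem:singlehyperplane} gives $T=s\mathbin{\odot}\beta$, and the rank-two image comparison puts $s$ in $\operatorname{span}\{p,r\}$. The only difference is the last step. The paper does not expand $\beta$; in your notation it evaluates $T$ on the single vector $v=y e_1-x e_2\in\ker s$, so the hypothesis gives $T(v,v)=-2xy=0$ directly, whereas you compare coefficients in $XY=(xX+yY)(x'X+y'Y)$. Both arguments are equally short.
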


\begin{proof}
Lemma~\ref{lem:singlehyperplane} writes $T=s\mathbin{\odot}\beta$.  Since $\operatorname{im}T=\operatorname{span}\{p,r\}$ and, because $T$ has rank two, also $\operatorname{im}T=\operatorname{span}\{s,\beta\}$, we have $s\in\operatorname{span}\{p,r\}$; write $s=ap+br$.  Choose $x,y\in V$ with $p(x)=1$, $r(x)=0$, $p(y)=0$, and $r(y)=1$.  Then $v=bx-ay$ lies in $\ker s$, while $T(v,v)=-2ab$.  The assumed vanishing of $T$ on $\ker s\times\ker s$ gives $ab=0$, so $[s]=[p]$ or $[s]=[r]$.
\end{proof}

\begin{theorem}[Global saturated-line classification in Bianchi I]
\label{thm:fourdirection}
For a smooth Bianchi I spatial metric $h(t)$ in any dimension $n\ge2$, if $h$ is non-static, the global saturated set contains zero, one, or two unoriented covector lines, and hence zero, two, or four oriented rays.  If it contains two distinct lines $[p]$ and $[r]$, the metric has the form \eqref{eq:twolinefamily}, those are exactly its saturated lines, and its oriented saturated set is
\begin{equation}
 \{[p]_+,[-p]_+,[r]_+,[-r]_+\}.
    \label{eq:fourboundglobal}
\end{equation}
A third distinct global line forces $h$ to be static.  If $h$ is static, every covector line and every oriented ray is saturated.
\end{theorem}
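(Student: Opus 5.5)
The classification reduces to linear algebra applied pointwise in time. By Definition~\ref{def:saturation} and Eq.~\eqref{eq:hyperplanefreezing}, a line $[s]$ is globally saturated exactly when $\dot h(t)|_{\ker s\times\ker s}=0$ for every $t\in I$. This condition depends only on the line, since $\ker s=\ker(cs)$ for $c\neq0$. So the saturated set of oriented rays is always the union of antipodal pairs $[s]_+,[-s]_+$ over the saturated lines, and the ray count is twice the line count. The static statement is then immediate: if $\dot h\equiv0$, the condition holds for every $s$.

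For the non-static case, fix $t_*$ with $T:=\dot h(t_*)\neq0$. Every globally saturated line $[s]$ satisfies $T|_{\ker s\times\ker s}=0$. By Lemma~\ref{lem:singlehyperplane}, this gives $T=s\mathbin{\odot}\beta_s$ for some $\beta_s$, which must be nonzero since $T\neq0$.

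The key step is to show that three pairwise distinct saturated lines $[s_1],[s_2],[s_3]$ force $\dot h\equiv0$. I would argue by contradiction. Take $[s_1],[s_2]$; these are linearly independent covectors, so Proposition~\ref{prop:twolines} applies and gives $\dot h(t)=f(t)\,s_1\mathbin{\odot}s_2$ for all $t$. If $\dot h$ is not identically zero, pick $t_*$ with $f(t_*)\neq0$. Then $T=f(t_*)s_1\mathbin{\odot}s_2$ is a nonzero multiple of $s_1\mathbin{\odot}s_2$ with $s_1,s_2$ independent. Proposition~\ref{prop:nothirdline} now says that only $[s_1]$ and $[s_2]$ satisfy $T|_{\ker s\times\ker s}=0$, while saturation of $[s_3]$ requires exactly this at $t_*$. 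This contradicts distinctness, so $f\equiv0$ and $h$ is static. Hence a non-static $h$ has at most two saturated lines, giving the options zero, one, or two.

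When there are exactly two lines $[p],[r]$, Proposition~\ref{prop:twolines} gives the form \eqref{eq:twolinefamily}. The argument just given shows no further line is saturated, since a third would force staticity. Doubling to rays yields \eqref{eq:fourboundglobal}. The one subtle point is that saturation at a single time $t_*$ only excludes lines pointwise. This suffices because global saturation is the stronger requirement, and non-staticity supplies a time at which $f\neq0$. I expect no real obstacle beyond being careful that Proposition~\ref{prop:nothirdline} is applied at such a time rather than on all of $I$.
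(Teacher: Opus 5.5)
Your proposal is correct and follows the paper's proof essentially step for step. You apply Proposition~\ref{prop:twolines} to a pair of saturated lines, use non-staticity to find $t_*$ with $f(t_*)\neq0$, invoke Proposition~\ref{prop:nothirdline} at that single time to exclude any further global line, and double lines to rays via $\ker(-p)=\ker p$. The differences are cosmetic: you get exactness of the two lines from the three-line contradiction rather than directly, and your opening appeal to Lemma~\ref{lem:singlehyperplane} is never used.
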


\begin{proof}
If $h$ is static, the final statement is immediate.  Suppose henceforth that $h$ is non-static.  If two distinct lines $[p]$ and $[r]$ are saturated, Proposition~\ref{prop:twolines} gives Eq.~\eqref{eq:twolinevelocity}.  Its scalar coefficient $f$ is therefore nonzero at some time $t_*$.  At $t_*$, Proposition~\ref{prop:nothirdline} shows that only $[p]$ and $[r]$ satisfy the instantaneous compression condition, so any globally saturated line must be one of them.  Thus $[p]$ and $[r]$ are exactly the global saturated lines.  If three distinct lines were globally saturated, applying Proposition~\ref{prop:twolines} to any chosen pair would again give Eq.~\eqref{eq:twolinevelocity}; wherever $f$ were nonzero, Proposition~\ref{prop:nothirdline} would exclude the third line.  Hence $f\equiv0$, so $\dot h\equiv0$, contradicting non-staticity.  The cases zero and one require no further restriction.  Antipodal symmetry follows because $\ker(-p)=\ker p$.
\end{proof}

At a fixed time, the saturation equation may also be written $\mathsf P_q\mathsf K\mathsf P_q=0$, with $\mathsf P_q=\mathsf 1-q\otimes q$.  Appendix~\ref{App:ScreenCompression} proves the complementary pointwise rank classification.  The map
\begin{equation}
 q_p^i(t)=\frac{h^{ij}(t)p_j}{\sqrt{h^{k\ell}(t)p_kp_\ell}}
 \label{eq:ptoqmap}
\end{equation}
is a bijection from oriented covector rays to physical unit directions at each fixed time, but $q_p(t)$ itself evolves.

It is useful to separate the purely geometric saturated set from the directions reached by the integral theorem:
\begin{align}
 \mathcal S_+(h)&:=\{[p]_+:\dot h|_{\ker p\times\ker p}=0\text{ for every }t\in I\},\nonumber\\
 \overline{\mathcal S}(h)&:=\{[p]:\dot h|_{\ker p\times\ker p}=0\text{ for every }t\in I\},\nonumber\\
 \mathcal C&:=\{[p]_+:\gamma_p\text{ is two-sided complete and }\underline I_p\ge0\}.
 \label{eq:geometricandcomplete}
\end{align}
The oriented sets are antipodally symmetric because $p\mapsto-p$ leaves $\omega_p$, the screen, completeness, and the Ricci cutoff integral unchanged, while $\overline{\mathcal S}(h)$ records the quotient into unoriented lines.  For a non-static metric, Theorem~\ref{thm:fourdirection} gives $|\overline{\mathcal S}(h)|\in\{0,1,2\}$ and $|\mathcal S_+(h)|\in\{0,2,4\}$.  Corollary~\ref{cor:bianchirigidity} proves only the inclusion $\mathcal C\subseteq\mathcal S_+(h)$; the reverse inclusion requires an independent completeness and cutoff check.  The theorem therefore also bounds $\mathcal C$. Under pointwise NCC the conclusion is stronger: Theorem~\ref{thm:onerayNCC} shows that a non-static metric has no two-sided complete null direction at all.

\begin{corollary}[Non-static pointwise-NCC incompleteness]
\label{cor:ncccomplete}
Suppose the pointwise null convergence condition
\begin{equation}
    R_{\mu\nu}\ell^\mu \ell^\nu\geq0
    \label{eq:pointwiseNCC}
\end{equation}
holds for every null vector $\ell$.  If $h$ is non-static, every null geodesic is incomplete in at least one affine direction.  Equivalently, the existence of a single two-sided complete null geodesic forces $I=\mathbb R$ and $h(t)=h_0$.  Thus full null geodesic completeness is more than is needed for local flatness; for $\Sigma=\mathbb R^n$ a constant spatial linear transformation gives Minkowski spacetime.  Under Einstein's equation, with arbitrary cosmological constant, the matter null energy condition gives the same conclusion.
\end{corollary}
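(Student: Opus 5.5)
The corollary is essentially a contrapositive repackaging of Theorem~\ref{thm:onerayNCC}, plus a translation of the curvature hypothesis into matter language. I will first argue the geometric statements, then the Einstein-equation remark.

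\emph{Incompleteness and the equivalent rigidity form.} Assume pointwise NCC~\eqref{eq:pointwiseNCC}. Suppose some null geodesic is complete in both affine directions. That geodesic may have $p=0$; this case must be handled separately. With $p=0$ the geodesic is comoving, with $dx^i/d\lambda=0$. But a null tangent with vanishing spatial part would have $dt/d\lambda=0$, so $k=0$. Hence every null geodesic has $p\neq0$, and it is the fixed-$p$ generator of Lemma~\ref{lem:traversal}. Theorem~\ref{thm:onerayNCC} then applies verbatim and yields $I=\mathbb R$ and $h(t)=h_0$.

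The contrapositive gives the first claim. If $h$ is non-static, no null geodesic is two-sided complete, so each is incomplete in at least one affine direction. For the Minkowski statement with $\Sigma=\mathbb R^n$, a constant $h_0$ can be brought to $\delta_{ij}$ by a constant linear change of spatial coordinates. This change preserves the synchronous form~\eqref{eq:BImetric}, and with $I=\mathbb R$ the result is all of Minkowski spacetime. The remark that ``full completeness is more than needed'' is simply the observation that one complete geodesic already suffices.

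\emph{Einstein's equation.} Next I would check that the matter NEC implies the geometric NCC. Suppose $R_{\mu\nu}-\tfrac12Rg_{\mu\nu}+\Lambda g_{\mu\nu}=8\pi G\,T_{\mu\nu}$. Contracting with a null $\ell$ kills both the $R g_{\mu\nu}$ and $\Lambda g_{\mu\nu}$ terms, leaving $R_{\mu\nu}\ell^\mu\ell^\nu=8\pi G\,T_{\mu\nu}\ell^\mu\ell^\nu$. So $T_{\mu\nu}\ell^\mu\ell^\nu\ge0$ gives~\eqref{eq:pointwiseNCC} for any $\Lambda$, and the previous paragraph applies.

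\emph{Main obstacle.} The only step needing care is confirming that Theorem~\ref{thm:onerayNCC} covers every null geodesic, not just those in a prescribed congruence. That is the point of the $p\neq0$ remark above, together with Lemma~\ref{lem:fixedpregularity}, which supplies the regular congruence automatically. I do not expect any genuine difficulty beyond this.
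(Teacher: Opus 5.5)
Your proposal is correct and matches the paper's proof: the geometric claims are Theorem~\ref{thm:onerayNCC} and its contrapositive, and the Einstein-equation remark follows because null contraction removes the trace and $\Lambda$ terms. Your extra check that every null geodesic has $p\neq0$, so that it is a fixed-$p$ generator (after reversing the affine parameter if it is past-directed), is a harmless detail that the paper leaves implicit.
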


\begin{proof}
The geometric statements are Theorem~\ref{thm:onerayNCC} and its contrapositive.  In Einstein gravity,
$G_{\mu\nu}+\Lambda g_{\mu\nu}=8\pi G T_{\mu\nu}$ and null contraction removes both the trace term in $G_{\mu\nu}$ and the cosmological-constant term, so $R_{\mu\nu}\ell^\mu\ell^\nu=8\pi G T_{\mu\nu}\ell^\mu\ell^\nu$.
\end{proof}

\begin{remark}[Focusing sanity check]
The optical-saturation step is consistent with the elementary focusing alternative.  Along a regular twist-free congruence, pointwise null convergence gives $\Theta'\le-\Theta^2/(n-1)$.  Any nonzero expansion of the sign that focuses in one affine direction produces a blow-up at finite affine parameter.  Two-sided regular completeness therefore selects the equality case $B=0$ and $R_{kk}=0$.  This observation does not by itself yield the remaining staticity conclusion of Theorem~\ref{thm:onerayNCC}; that step uses pointwise NCC in the transverse instantaneous null directions.  Achronality or absence of focal/conjugate points can guarantee the required regularity in other settings, but Theorem~\ref{thm:affinerigidity} does not assume them separately.
\end{remark}

\section{Diagonal Bianchi I and Sharpness}
\label{Sec:Diagonal}

The diagonal subclass makes the exceptional set explicit and shows that the four-ray saturation bound in Theorem~\ref{thm:fourdirection} is optimal.  This sharpness concerns the saturation and cutoff-stable averaged regime, not the pointwise-NCC theorem of Sec.~\ref{Sec:OneRayRigidity}.  Let
\begin{equation}
    ds^2=-dt^2+\sum_{i=1}^{n}a_i^2(t)(dx^i)^2,
    \qquad
    H_i:=\frac{\dot a_i}{a_i},
    \label{eq:diagonalmetric}
\end{equation}
with $a_i(t)>0$.  In the orthonormal spatial frame the expansion operator is
\begin{equation}
    D=\operatorname{diag}(H_1,\ldots,H_n).
    \label{eq:diagonalD}
\end{equation}
For a nonzero conserved covector $p_i$,
\begin{equation}
    \omega_p^2=\sum_i\frac{p_i^2}{a_i^2},
    \qquad
    \nu_i=\frac{p_i}{a_i\omega_p},
    \qquad
    \sum_i\nu_i^2=1,
    \label{eq:diagonaldirection}
\end{equation}
and saturation is $P_\nu D P_\nu=0$.

The support of $p$ is time independent, and the equality condition has only three possibilities.  If $r\notin\operatorname{supp}p$, then $e_{\hat r}$ lies in the screen and saturation requires $H_r=0$.  For two distinct support indices $i,j$, the screen vector $\nu_j e_{\hat i}-\nu_i e_{\hat j}$ gives
\begin{equation}
    H_i\nu_j^2+H_j\nu_i^2=0.
    \label{eq:paircondition}
\end{equation}
They are also sufficient.  If the support is one index, the outside-support coordinate vectors span the screen.  If the support is $\{i,j\}$, a screen basis is
\begin{equation}
 \nu_j e_{\hat i}-\nu_i e_{\hat j},
 \qquad e_{\hat r}\quad(r\notin\{i,j\}).
 \label{eq:diagonalscreenbasis}
\end{equation}
Diagonal $D$ makes all cross terms vanish, so the outside-support equations and Eq.~\eqref{eq:paircondition} exhaust every component of $P_\nu D P_\nu$.  If the support contains at least three indices, put $x_i=H_i/\nu_i^2$ on the support.  The pair equations give $x_i+x_j=0$ for every pair; any three indices force all $x_i=0$, and the outside-support equations then give $D=0$.  This proves the complete classification:
\begin{align}
|\operatorname{supp}p|\geq3
    &\quad\Longrightarrow\quad H_1=\cdots=H_n=0,
    \label{eq:threesupportstatic}\\
\operatorname{supp}p=\{i\}
    &\quad\Longleftrightarrow\quad H_r=0\quad(r\neq i),
    \label{eq:onesupportclassification}\\
\operatorname{supp}p=\{i,j\}
    &\quad\Longleftrightarrow\quad
    H_r=0\quad(r\notin\{i,j\}),\qquad
    \frac{a_i^2(t)}{p_i^2}+\frac{a_j^2(t)}{p_j^2}=C_{ij}>0.
    \label{eq:ellipserelation}
\end{align}
For the last line, Eq.~\eqref{eq:paircondition} becomes
\begin{equation}
    H_i\frac{a_i^2}{p_i^2}+H_j\frac{a_j^2}{p_j^2}=0,
    \label{eq:ellipseHubble}
\end{equation}
which integrates directly to Eq.~\eqref{eq:ellipserelation}.  A saturated direction with three or more nonzero momentum components therefore forces the entire diagonal geometry to be static.  In a non-static model, the saturated set is therefore either empty, one antipodal axis pair, or four sign choices associated with one fixed absolute momentum ratio in a single coordinate two-plane.

\subsection{A sharp periodic model}
\label{Sec:SharpExamples}

Take $I=\mathbb R$ and let every $a_i(t)$ be smooth, positive, and periodic with a common period.  There are constants $0<m\le M<\infty$ with $m\le a_i^2(t)\le M$.  Hence $|p|/\sqrt M\le\omega_p\le|p|/\sqrt m$, so null affine time and cosmic time diverge together; moreover $dx^i/d\lambda=p_i/a_i^2$ is bounded.  For a timelike geodesic, spatial covectors are again conserved and proper-time normalization gives
\begin{equation}
    \frac{dt}{d\tau}=\left(1+\sum_i\frac{p_i^2}{a_i^2(t)}\right)^{1/2}
    \label{eq:periodictimelike}
\end{equation}
with $1\le dt/d\tau\le\sqrt{1+|p|^2/m}$, while $dx^i/d\tau=p_i/a_i^2$ is bounded.  On any finite proper-time interval, position and velocity therefore remain finite and the smooth geodesic ODE extends by the standard continuation theorem.  The spacetime is both null and timelike geodesically complete.  Smooth periodicity also bounds $H_i$ and $\dot H_i$, hence all orthonormal components of the Riemann tensor.  No claim about spacelike completeness is needed here.

The optical bulk is
\begin{equation}
    J_p^{\mathrm{tot}}=\int_{-\infty}^{+\infty}F_p(t)\,dt,
    \qquad
    F_p(t):=\omega_p(t)\operatorname{tr}\!\left[(P_\nu D P_\nu)^2\right].
    \label{eq:periodicbulk}
\end{equation}
If $P_\nu D P_\nu\equiv0$, then $R_{\mu\nu}k^\mu k^\nu=0$ pointwise and $I_p^{\mathrm{ind}}=0$.  Otherwise $F_p$ is a continuous nonnegative periodic function with strictly positive period mean $\overline F_p$.  Thus both $\int_0^T F_p\,dt$ and $\int_{-T}^0 F_p\,dt$ equal $\overline F_p T+O(1)$.  The uniform frequency bounds identify independently remote affine and cosmic-time cutoffs at both ends, while the boundary expansion stays bounded.  Equation~\eqref{eq:cutoffidentity} therefore gives the uniform dichotomy
\begin{equation}
    \underline I_p=I_p^{\mathrm{ind}}=
    \begin{cases}
        0, & P_\nu D P_\nu\equiv0,\\[3pt]
        -\infty, & P_\nu D P_\nu\not\equiv0.
    \end{cases}
    \label{eq:periodicdichotomy}
\end{equation}
Thus periodicity does not hide the negative optical bulk in a principal-value cancellation; every nonsaturated direction accumulates an independent-cutoff integral of $-\infty$.

The four-ray bound is attained in $3+1$ dimensions by
\begin{equation}
    a_1^2(t)=A+\varepsilon\cos t,
    \qquad
    a_2^2(t)=B-\varepsilon\cos t,
    \qquad
    a_3(t)=c>0,
    \label{eq:sharpperiodicmodel}
\end{equation}
with $A>|\varepsilon|>0$ and $B>|\varepsilon|>0$.  Since $a_1^2+a_2^2=A+B$, Eq.~\eqref{eq:ellipserelation} is satisfied precisely for momentum directions in the $(1,2)$ plane with $p_1^2=p_2^2$.  Modulo positive rescaling, the four oriented exceptional directions are
\begin{equation}
    (p_1,p_2)\propto(1,1),\ (1,-1),\ (-1,1),\ (-1,-1),
    \label{eq:sharpmomenta}
\end{equation}
with $p_3=0$.  Axis directions fail because both $H_1$ and $H_2$ are nonzero on open intervals; any other support plane violates the outside-support condition; and support on all three axes would force staticity.  Hence Eq.~\eqref{eq:sharpmomenta} is the complete exceptional set and every other null direction has $I_p^{\mathrm{ind}}=-\infty$.

The normalization in the invariant two-line form can be made explicit.  With $t_0=\pi/2$, set
\begin{equation}
 p=\frac{dx^1+dx^2}{\sqrt2},\qquad
 r=\frac{dx^1-dx^2}{\sqrt2},\qquad
 \Phi(t)=\varepsilon\cos t.
 \label{eq:sharpfactorization}
\end{equation}
Then $p\mathbin{\odot}r=(dx^1)^2-(dx^2)^2$, so Eq.~\eqref{eq:sharpperiodicmodel} is exactly Eq.~\eqref{eq:twolinefamily} with $h_0=A(dx^1)^2+B(dx^2)^2+c^2(dx^3)^2$.  The factors of $\sqrt2$ are essential for the displayed normalization of $\Phi$; rescaling either covector changes $\Phi$ inversely but leaves the two projective lines unchanged.

Figure~\ref{fig:sharpmodel} makes the sharpness mechanism explicit.  The time dependence is carried by two compensating scale factors, while the saturated set consists of two lines through the origin in conserved-momentum space, i.e.\ four oriented rays.
\begin{figure*}[t]
    \centering
    \includegraphics[width=0.48\textwidth]{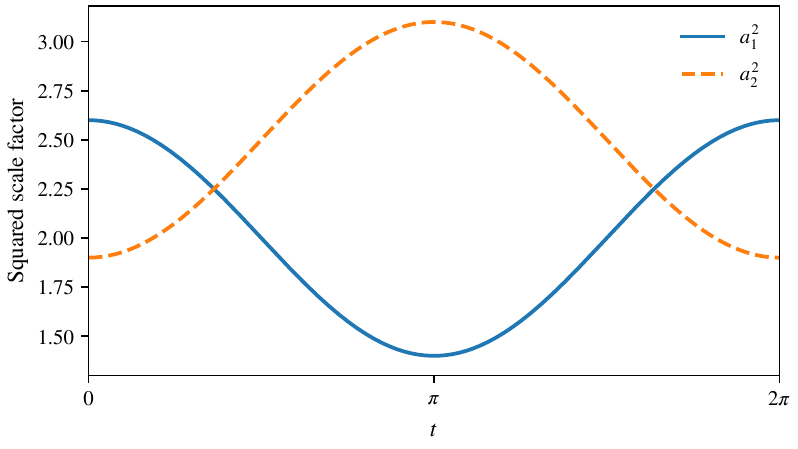}\hfill
    \includegraphics[width=0.48\textwidth]{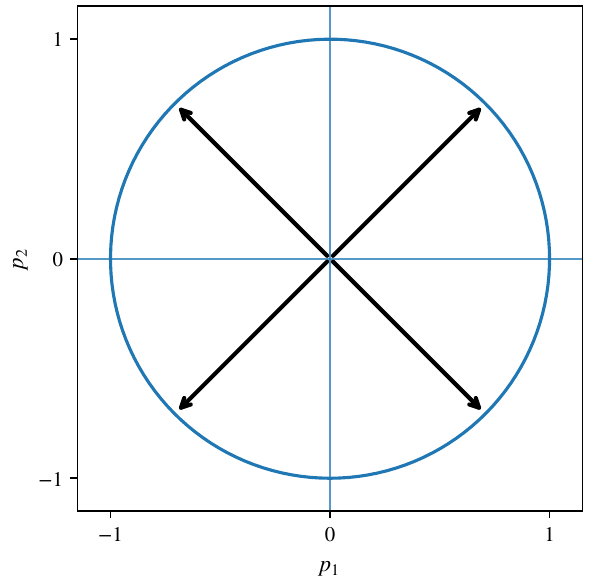}
    \caption{Sharp periodic example. Left: representative squared scale factors $a_1^2=A+\varepsilon\cos t$ and $a_2^2=B-\varepsilon\cos t$ over one period (shown for $A=2$, $B=2.5$, $\varepsilon=0.6$); their sum is constant. Right: the only saturated conserved-momentum directions in the $(p_1,p_2)$ plane are $p_1=\pm p_2$, giving four oriented rays on an auxiliary Euclidean unit circle in covector space. Every other null direction in the periodic spacetime has $I_p^{\mathrm{ind}}=-\infty$.}
    \label{fig:sharpmodel}
\end{figure*}

The example is a geometric sharpness construction rather than a proposed microscopic matter model.  Its effective Einstein source can nevertheless be audited explicitly.  Write
\begin{equation}
 X=A+\varepsilon\cos t,
 \qquad Y=B-\varepsilon\cos t,
 \label{eq:sharpXY}
\end{equation}
and let $P_i=T_{\hat\imath\hat\imath}$ denote the principal orthonormal pressures.  For vanishing cosmological constant, Einstein's equation gives
\begin{align}
 8\pi G\,\rho&=-\frac{\varepsilon^2\sin^2 t}{4XY},
 \label{eq:sharprho}\\
 8\pi G\,P_1&=-\frac{\varepsilon\cos t}{2Y}
                 +\frac{\varepsilon^2\sin^2t}{4Y^2},\nonumber\\
 8\pi G\,P_2&= \frac{\varepsilon\cos t}{2X}
                 +\frac{\varepsilon^2\sin^2t}{4X^2},\nonumber\\
 8\pi G\,P_3&=\frac{\varepsilon\cos t}{2}
                 \left(\frac1X-\frac1Y\right)
   +\frac{\varepsilon^2\sin^2t}{4}
                 \left(\frac1{X^2}+\frac1{Y^2}+\frac1{XY}\right).
 \label{eq:sharppressures}
\end{align}
The null combinations obey
\begin{align}
 8\pi G(\rho+P_1)
  &=-\frac{\varepsilon\cos t}{2Y}
    +\frac{\varepsilon^2\sin^2t}{4Y^2}
    -\frac{\varepsilon^2\sin^2t}{4XY},\nonumber\\
 8\pi G(\rho+P_2)
  &= \frac{\varepsilon\cos t}{2X}
    +\frac{\varepsilon^2\sin^2t}{4X^2}
    -\frac{\varepsilon^2\sin^2t}{4XY},\nonumber\\
 8\pi G(\rho+P_3)
  &=\frac{\varepsilon\cos t}{2}
       \left(\frac1X-\frac1Y\right)
    +\frac{\varepsilon^2\sin^2t}{4}
       \left(\frac1{X^2}+\frac1{Y^2}\right),
 \label{eq:sharpnullcombinations}
\end{align}
and the exact identity
\begin{equation}
 Y(\rho+P_1)+X(\rho+P_2)=0
 \label{eq:sharpnullidentity}
\end{equation}
is precisely the pointwise saturation of the four rays in Eq.~\eqref{eq:sharpmomenta}.  Without loss of generality take $\varepsilon>0$ by a half-period shift.  Then at $t=0$, $\rho+P_1<0<\rho+P_2$, and at $t=\pi$ the violated principal direction is reversed.  Thus the null and strong energy conditions fail on open phase intervals.  Moreover $\rho<0$ away from the discrete turning phases, while at each turning phase one of the principal null combinations is negative; the weak, and hence dominant, energy condition therefore fails at every time. Consequently, the periodic geometry is not a counterexample to Theorem~\ref{thm:onerayNCC}: null completeness and four saturated rays coexist here only because pointwise NCC fails.  The same example therefore shows both that the four-ray saturation bound is sharp and that pointwise NCC is essential for the single-direction rigidity theorem.

Along the four exceptional rays $T_{\mu\nu}k^\mu k^\nu=0$ pointwise, whereas for every other null direction the independent-cutoff affine matter integral equals $I_p^{\mathrm{ind}}/(8\pi G)=-\infty$.  A single spatially homogeneous minimally coupled canonical or phantom scalar has isotropic principal pressure, $P_1=P_2=P_3$; a potential or cosmological constant changes only the isotropic part.  Equations~\eqref{eq:sharppressures} already have $P_1<0<P_2$ at $t=0$ for the non-static family, so that simple scalar class cannot realize the source except in the static limit $\varepsilon=0$.  Standard-sign positive-kinetic scalar--vector, scalar--two-form, and free $p$-form Bianchi models of the types considered in Refs.~\cite{WatanabeKannoSoda2009,OhashiSodaTsujikawa2013,NormannEtAl2018} can provide anisotropic stress, but their stress tensors obey the null energy condition and therefore cannot reproduce the principal null-energy violations required here. General vector potentials or non-minimal curvature couplings can evade that obstruction, but no on-shell and stable realization of the present periodic geometry is established here \cite{KoivistoMota2008}.  This is not a no-go theorem for multi-vector, non-minimal, higher-derivative, or modified-gravity realizations.  The construction establishes geometric sharpness while making clear the exotic effective stress required by generic directions.

\section{Null Lines and a Conditional AANEC Corollary}
\label{Sec:CompletenessANEC}

The preceding results are purely geometric.  To turn them into an averaged matter statement one must specify both the geodesic class on which the energy condition is assumed and the limiting prescription used for the affine integral.

On spatial topology $\Sigma=\mathbb R^n$, every two-sided complete fixed-$p$ generator is a null line.  The optical function
\begin{equation}
    u_p(t,x)=p_i x^i-\int^t\omega_p(s)\,ds
    \label{eq:opticalfunctionANEC}
\end{equation}
is globally single valued.  For a future-directed causal vector $X=\dot t\,\partial_t+v^i\partial_i$, causality and Cauchy--Schwarz give
\begin{equation}
    du_p(X)=\omega_p\bigl[h(q,v)-\dot t\bigr]\leq0,
    \label{eq:opticalmonotonicity}
\end{equation}
with strict inequality for timelike $X$, while $du_p(k)=0$ along the fixed-$p$ generator.  Two points on that generator therefore cannot be joined by a timelike curve.  A two-sided complete generator is thus achronal and hence a null line.  On compact or partially compact spatial quotients this global argument need not descend, so achronality must be checked separately.

In Einstein gravity,
\begin{equation}
    R_{\mu\nu}k^\mu k^\nu=8\pi G\,T_{\mu\nu}k^\mu k^\nu,
    \label{eq:nullEinsteinIV}
\end{equation}
with the cosmological constant dropping out.  A matter averaged inequality sufficient for Theorem~\ref{thm:affinerigidity} is
\begin{equation}
    \lim_{R\to\infty}\inf_{L_-,L_+\geq R}
    \int_{-L_-}^{L_+}T_{\mu\nu}k^\mu k^\nu\,d\lambda\geq0.
    \label{eq:mattercutoffcondition}
\end{equation}
Verch uses a formally analogous independent two-end liminf in a two-dimensional Minkowski quantum-field-theory ANEC formulation \cite{Verch2000}.  This is a restricted precedent for the cutoff logic, not a general curved-spacetime theorem or a declaration that every ANEC definition must use it.  An improper affine integral with separately convergent one-sided tails is a sufficient special case, because the independent two-cutoff limit exists and equals the sum of the two tails.  In a semiclassical application, Eq.~\eqref{eq:mattercutoffcondition} is understood with $T_{\mu\nu}$ replaced by $\langle T_{\mu\nu}\rangle_{\mathrm{ren}}$.  Other prescriptions are logically distinct: in particular, a symmetric Cauchy principal value, a weighted average, or a finite-segment smeared inequality does not by itself imply Eq.~\eqref{eq:mattercutoffcondition}. Condition~\eqref{eq:mattercutoffcondition} on a single complete line still yields only the optical saturation of Corollary~\ref{cor:bianchirigidity}; it does not imply Theorem~\ref{thm:onerayNCC}, whose second step uses the pointwise sign of the null Ricci contraction in all instantaneous null directions.

Here AANEC denotes an achronal ANEC-type condition imposed on complete achronal null geodesics (null lines).  In semiclassical applications, the Graham--Olum formulation is more specifically a self-consistent achronal ANEC condition, so any use below inherits the self-consistency assumptions, field and state restrictions, and affine limiting prescription of the external result invoked \cite{GrahamOlum2007}.  Flat-space ANEC results and restricted curved-background theorems do not automatically supply Eq.~\eqref{eq:mattercutoffcondition} on an arbitrary Bianchi I null line \cite{Klinkhammer1991,WaldYurtsever1991,Yurtsever1995,FlanaganWald1996,FewsterRoman2003,FewsterOlumPfenning2007}.  Reviews and explicit curved-background counterexamples likewise emphasize the dependence on field content, state, renormalization, and geometry \cite{KontouSanders2020,UrbanOlum2010}.  Curved-spacetime positive results remain assumption-dependent; in particular, semiclassical gravitational shear requires additional care \cite{Wall2010,KontouOlum2013,KontouOlum2015,IshibashiMaedaMefford2019}.

\begin{assumption}[Semiclassical matching assumption]
\label{ass:semiclassicalmatching}
Fix a semiclassical theory and a stated class of fields, couplings, and states.  The source to which the external averaged inequality applies is identified explicitly as the total renormalized source, or as a specified contribution together with a justified treatment of every remaining contribution.  The renormalization prescription and the status of higher-curvature counterterms are fixed, self-consistency means that the state and geometry solve the same stated semiclassical field equation, and its null projection is
\begin{equation}
    R_{\mu\nu}k^\mu k^\nu
    =8\pi G\,\langle T_{\mu\nu}\rangle_{\mathrm{ren}}k^\mu k^\nu,
    \label{eq:nullSemiclassical}
\end{equation}
with no additional unaccounted geometric term.  The affine parameter used in the external inequality is related to ours by a positive affine transformation, and the external inequality uses the same independent past/future cutoff prescription as Eq.~\eqref{eq:mattercutoffcondition}.  The external theorem also applies in the same spacetime dimension and geometric setting. Each item must be supplied by the theory being invoked; similarity of terminology alone does not verify this assumption.
\end{assumption}

\begin{corollary}[Conditional AANEC rigidity]
\label{cor:conditionalAANEC}
Suppose $\Sigma=\mathbb R^n$ and the Bianchi I spacetime is null geodesically complete.  Assume either classical Einstein gravity with Eq.~\eqref{eq:mattercutoffcondition} on every complete null line, or a semiclassical theory satisfying Assumption~\ref{ass:semiclassicalmatching} for which an external AANEC result actually implies that condition.  Then $h(t)=h_0$, the cosmic-time interval is $I=\mathbb R$, and the spacetime is Minkowski after a constant spatial linear transformation.  Equivalently, under these external premises, a non-static Bianchi I spacetime on $\mathbb R\times\mathbb R^n$ cannot be null geodesically complete.
\end{corollary}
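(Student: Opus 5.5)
The plan is to reduce the statement to Theorem~\ref{thm:onerayNCC}, or more precisely to a cutoff-stable analogue of it.

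\textbf{Step 1: supply a null line.} Null geodesic completeness makes every fixed-$p$ generator two-sided complete. Lemma~\ref{lem:traversal} then says that these generators traverse all of $I$. On $\Sigma=\mathbb R^n$ the argument around Eq.~\eqref{eq:opticalmonotonicity} shows that each such generator is achronal, hence a null line.

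\textbf{Step 2: turn the matter hypothesis into $\underline I_p\ge0$.} In the classical case, Eq.~\eqref{eq:nullEinsteinIV} converts Eq.~\eqref{eq:mattercutoffcondition} into $\underline I_p\ge0$ for every $p\neq0$. The cosmological constant drops out of the null projection. In the semiclassical case, Assumption~\ref{ass:semiclassicalmatching} gives the same conversion through Eq.~\eqref{eq:nullSemiclassical}. The assumed affine reparametrization only rescales $\underline I_p$ by a positive factor, so its sign is preserved.

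\textbf{Step 3: saturate every direction.} Corollary~\ref{cor:bianchirigidity} now yields $\dot h|_{\ker p\times\ker p}=0$ for every $t\in I$ and every $p\neq0$. In the notation of Eq.~\eqref{eq:geometricandcomplete}, $\mathcal C$ is the set of all oriented rays, so $\mathcal S_+(h)$ is as well. If $h$ were non-static, Theorem~\ref{thm:fourdirection} would allow at most two lines, which is a contradiction. Alternatively, one can argue directly: if $\dot h(t)$ vanishes on $\ker p$ for three independent lines (say $n\ge2$ lines $dx^1,dx^2,dx^1+dx^2$), then Propositions~\ref{prop:twolines} and \ref{prop:nothirdline} force $\dot h(t)=0$. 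Hence $h(t)=h_0$ on $I$.

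\textbf{Step 4: show $I=\mathbb R$.} This is the point I expect needs care, because staticity alone does not exclude a finite endpoint of $I$. With $h$ constant, $\omega_p$ is a positive constant, so $\lambda=(t-t_0)/\omega_p+\lambda_0$. Completeness of any generator, through Lemma~\ref{lem:traversal}, therefore forces both endpoint integrals of $1/\omega_p$ to diverge. That requires $I=\mathbb R$.

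\textbf{Step 5: identify Minkowski space.} A constant linear change of spatial coordinates takes $h_0$ to the identity, giving Minkowski spacetime on $\mathbb R\times\mathbb R^n$. The contrapositive is the equivalent "non-static implies incomplete" form.

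The main subtlety is that the hypothesis is needed on every line, not just one. Theorem~\ref{thm:onerayNCC} is unavailable because only averaged positivity is assumed. The all-directions saturation in Step 3 replaces the pointwise transverse NCC used there.
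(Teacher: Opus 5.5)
Your proposal is correct and follows the paper's proof essentially step for step: the matter hypothesis becomes $\underline I_p\ge0$ on every fixed-$p$ generator (each a null line by the optical-function argument), Corollary~\ref{cor:bianchirigidity} saturates every direction, Theorem~\ref{thm:fourdirection} gives $h(t)=h_0$, and constancy of $\omega_p$ together with completeness forces $I=\mathbb R$. Two minor points do not affect the argument: your opening framing as a reduction to Theorem~\ref{thm:onerayNCC} is misleading, since, as you note at the end, that theorem is never used; and in your alternative Step 3 the lines $dx^1$, $dx^2$, $dx^1+dx^2$ are pairwise distinct rather than independent.
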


\begin{proof}
Equation~\eqref{eq:nullEinsteinIV}, or the matched null projection \eqref{eq:nullSemiclassical}, converts the assumed matter inequality into $\underline I_p\ge0$ for every conserved-covector direction.  Every complete generator is a null line by Eqs.~\eqref{eq:opticalfunctionANEC}--\eqref{eq:opticalmonotonicity}.  Corollary~\ref{cor:bianchirigidity} saturates every direction, and Theorem~\ref{thm:fourdirection} gives $h(t)=h_0$.  Then $\omega_p$ is constant for every $p$; since the spacetime is null geodesically complete, a finite endpoint of $I$ would be reached in finite affine parameter, so $I=\mathbb R$.  For $\Sigma=\mathbb R^n$, a constant spatial linear transformation gives Minkowski spacetime.
\end{proof}

This corollary is a conditional translation of the geometric theorem, not a derivation of AANEC.  A fixed-background theorem, a different cutoff convention, or a semiclassical equation with unmatched higher-curvature terms does not pass Assumption~\ref{ass:semiclassicalmatching}.

\section{Discussion and Conclusions}
\label{Sec:Discussion}

The results separate two rigidity layers that should not be conflated.  Under the cutoff-stable averaged condition on a two-sided complete fixed-$p$ congruence, the endpoint theorem forces $B_p=0$ and Bianchi I homogeneity freezes $h$ on $\ker p$.  One saturated line gives $\dot h=p\mathbin{\odot}\alpha(t)$; two distinct saturated lines give Eq.~\eqref{eq:twolinefamily}; and a non-static metric has zero, one, or two unoriented saturated lines, equivalently zero, two, or four oriented rays.  The periodic construction shows that this four-ray saturation bound is sharp.  Under pointwise NCC, however, Theorem~\ref{thm:onerayNCC} is stronger: the existence of a single two-sided complete null geodesic already forces $I=\mathbb R$ and $h(t)=h_0$.  Thus a non-static pointwise-NCC Bianchi I spacetime has no two-sided complete null geodesic; full null completeness is not needed.

The extra rigidity comes from combining a null-Ricci zero direction with positivity.  Raychaudhuri first produces a null-Ricci zero direction along the selected complete generator.  In the single-saturated normal form, the remaining metric data are a positive Schur factor $a(t)$ and a mixed vector $b(t)$.  Pointwise NCC, applied to all instantaneous null directions rather than only to the selected generator, makes the spatial null-Ricci form $N$ positive semidefinite.  Since $N_{qq}=0$, positivity forces the $q$ row to vanish and hence $\dot v=0$ for $v=\dot b/(2a)$.  A nonzero constant $v$ yields Eq.~\eqref{eq:onerayoscillator} and contradicts two-sided affine completeness; $v=0$ leaves $\ddot a\le0$, and completeness then places the cosmic-time interval on all of $\mathbb R$, where positivity and concavity force $a$ to be constant.  This second step has no analogue under a cutoff condition imposed only on the selected ray.

This distinction also clarifies the relation to earlier integrated-curvature, averaged-energy, and null-line rigidity results.  Weakened integral focusing conditions go back at least to Borde, and the independent two-end liminf prescription itself appears in Verch's ANEC formulation \cite{Borde1987,Verch2000}; neither is claimed as new here.  The sign conclusion is consistent with the Riccati literature \cite{EhrlichKim1994,GallowayGraf2019}, while Galloway's null splitting theorem gives a distinct global rigidity statement: a null line in a null geodesically complete NCC spacetime lies in a smooth closed achronal totally geodesic null hypersurface \cite{Galloway2000}.  The present proof does not invoke that theorem.  Its Bianchi I input is instead the homogeneous fixed-covector congruence: the endpoint argument gives exact saturation without assuming ambient null completeness, achronality, endpoint limits of $\Theta_p$, or convergence of the improper curvature integral, after which homogeneity either classifies the weaker saturation sector or, together with pointwise NCC, upgrades one complete direction to flatness.

The periodic model makes the boundary between the two layers explicit.  It is smooth, bounded-curvature, null complete, and non-static, with exactly four saturated rays and $I_p^{\mathrm{ind}}=-\infty$ in every other direction.  Its effective Einstein source violates the null energy condition on open phase intervals, so it sharpens Theorem~\ref{thm:fourdirection} without contradicting Theorem~\ref{thm:onerayNCC}.  No sign assumption on the principal expansion rates is required for either geometric argument.  This complements the recent theorem of Garcia-Saenz, Hua, and Sherif: their NEC result starts from simultaneous directional expansion and concludes past incompleteness, whereas the present pointwise-NCC result starts from one two-sided complete null geodesic and concludes staticity \cite{GarciaSaenzHuaSherif2026}.

Recent work on extendibility in FLRW and Bianchi I and exact Bianchi I constructions with nonstandard scalar couplings illustrates why the relevant notion of boundary must be stated explicitly \cite{NomuraYoshida2021,Ritchie2026}.  Proper-time statements address a different criterion from null affine completeness.  In an anisotropic geometry the latter is controlled direction by direction by $d\lambda=dt/\omega_p(t)$ in Eq.~\eqref{eq:nullgeodesics}; infinite proper time therefore does not, by itself, establish completeness of all null generators.  This distinction is useful when comparing extendible, cyclic, or long-lived anisotropic models across different matter sectors.

The AANEC application has a narrower domain of validity than the geometric theorem.  The geometric statement constrains $R_{\mu\nu}k^\mu k^\nu$ and is independent of field equations.  Translating it to stress energy uses the Einstein equation; the semiclassical version additionally requires both an external energy inequality on the relevant complete null lines with a compatible affine parametrization and the same independent-cutoff prescription, and a field equation whose null projection has the form in Eq.~\eqref{eq:nullSemiclassical}.  Modified-gravity field equations, semiclassical equations with additional higher-curvature geometric terms, spatial quotients for which the projected generators fail to be achronal, and quantum treatments in which gravitational shear itself fluctuates all require separate analysis.

A natural next step is to determine which classical or semiclassical matter systems can realize the one-line and two-line non-diagonal normal forms in the weaker saturation regime while satisfying controlled stability conditions. It would also be useful to identify precisely which weighted or smeared quantum energy inequalities imply the independent-cutoff condition, rather than merely resemble it.  More broadly, the endpoint argument uses only a smooth twist-free congruence and the positive optical-square term in Raychaudhuri's equation, suggesting that analogous equality mechanisms may survive beyond exact spatial homogeneity.

\data{No data were created or analysed in this study; all results follow analytically from the equations displayed in the text.}

\appendix
\section{Screen-compression classification}
\label{App:ScreenCompression}

For completeness, we give the pointwise linear-algebra classification that complements Theorem~\ref{thm:fourdirection}.  Let $A$ be a self-adjoint operator on a real $n$-dimensional inner-product space, $n\geq2$, and define
\begin{equation}
    \mathcal E(A)=\{q\in S^{n-1}:P_qAP_q=0\},
    \qquad P_q=I-q\otimes q.
    \label{eq:abstractexceptionalset}
\end{equation}
If $A=0$, then $\mathcal E(A)=S^{n-1}$.  Suppose $A\neq0$ and $q\in\mathcal E(A)$.  Set $w=P_qAq\in q^\perp$ and $\alpha=\langle q,Aq\rangle$.  Vanishing of the compression on $q^\perp$ gives
\begin{equation}
    A=\alpha\,q\otimes q+q\otimes w+w\otimes q,
    \label{eq:ranktwodecomposition}
\end{equation}
so every nonzero operator admitting a solution has rank at most two.

Indeed, Eq.~\eqref{eq:ranktwodecomposition} gives $Au=\langle w,u\rangle q$ for every $u\perp q$, and its value on $q$ is $Aq=\alpha q+w$.  Hence $\operatorname{im}A\subseteq\operatorname{span}\{q,w\}$.  If the rank is two, equality holds; since $A$ is self-adjoint, this image is the orthogonal complement of $\ker A$ and therefore the direct sum of its two nonzero eigenspaces.  In particular, every solution $q$ lies in that nonzero eigenplane.

For rank one, Eq.~\eqref{eq:ranktwodecomposition} requires $w=0$ and $A=\alpha q\otimes q$, so the only solutions are the two orientations along $\operatorname{im}A$.  For rank two, $w\neq0$ and the nonzero $2\times2$ block in the basis $\{q,w/\|w\|\}$ has determinant $-\|w\|^2<0$; hence any rank-two operator admitting a solution is indefinite.  Conversely, let its nonzero eigenvalues be $\kappa_+>0>\kappa_-$ with orthonormal eigenvectors $e_+,e_-$.  Any solution lies in their span.  Writing $q=\cos\phi\,e_++\sin\phi\,e_-$, the compression vanishes exactly when
\begin{equation}
    \kappa_+\sin^2\phi+\kappa_-\cos^2\phi=0.
    \label{eq:ranktwosaturation}
\end{equation}
The four and only four oriented solutions are
\begin{equation}
    q_{\epsilon_+,\epsilon_-}
    =\epsilon_+\sqrt{\frac{\kappa_+}{\kappa_+-\kappa_-}}\,e_+
    +\epsilon_-\sqrt{\frac{-\kappa_-}{\kappa_+-\kappa_-}}\,e_-,
    \qquad \epsilon_\pm=\pm1.
    \label{eq:fourexceptionaldirections}
\end{equation}
Thus rank-one operators give one antipodal pair, rank-two semidefinite operators give no solution, rank-two indefinite operators give exactly four oriented solutions, and rank at least three gives no solution.

\section{Coordinate check of the diagonal specialization}
\label{App:CoordinateCheck}

The main text derives the optical tensor in a basis-independent form. As a check on the curvature sign and affine normalization, consider the diagonal metric \eqref{eq:diagonalmetric} and set $\theta:=\sum_iH_i$. The nonzero Christoffel symbols involving time are
\begin{equation}
    \Gamma^{0}{}_{ii}=a_i^2H_i,
    \qquad
    \Gamma^{i}{}_{0i}=\Gamma^{i}{}_{i0}=H_i
    \qquad\text{(no sum)},
    \label{eq:appendixChristoffel}
\end{equation}
which give
\begin{equation}
    R_{00}=-\sum_i(\dot H_i+H_i^2),
    \qquad
    R_{ii}=a_i^2(\dot H_i+\theta H_i)
    \qquad\text{(no sum)}.
    \label{eq:appendixRicci}
\end{equation}
For the null tangent $k^{\hat\mu}=\omega_p(1,\nu_1,\ldots,\nu_n)$ in the orthonormal frame,
\begin{equation}
    \frac{R_{\mu\nu}k^\mu k^\nu}{\omega_p^2}
    =-\sum_i(\dot H_i+H_i^2)
    +\sum_i\nu_i^2(\dot H_i+\theta H_i).
    \label{eq:appendixRkk}
\end{equation}

To compare directly with Raychaudhuri, define
\begin{equation}
    h_p:=\sum_i\nu_i^2H_i,
    \qquad
    A_p:=\sum_i\nu_i^2H_i^2.
    \label{eq:appendixhpAp}
\end{equation}
From $\nu_i=p_i/(a_i\omega_p)$ and Eq.~\eqref{eq:omegadot},
\begin{equation}
    \dot\nu_i=\nu_i(h_p-H_i),
    \qquad
    \dot h_p=2h_p^2-2A_p+\sum_i\nu_i^2\dot H_i.
    \label{eq:appendixnudot}
\end{equation}
Since $\Theta_p=\omega_p(\theta-h_p)$, direct differentiation gives
\begin{equation}
    \frac{1}{\omega_p^2}\frac{d\Theta_p}{d\lambda}
    =\sum_i\dot H_i-\sum_i\nu_i^2\dot H_i-\theta h_p-h_p^2+2A_p.
    \label{eq:appendixThetadot}
\end{equation}
On the other hand,
\begin{equation}
    \operatorname{tr}[(P_\nu D P_\nu)^2]
    =\sum_iH_i^2-2A_p+h_p^2.
    \label{eq:appendixopticalsquare}
\end{equation}
Combining Eqs.~\eqref{eq:appendixThetadot} and \eqref{eq:appendixopticalsquare} yields
\begin{equation}
    -\frac{1}{\omega_p^2}\frac{d\Theta_p}{d\lambda}
    -\operatorname{tr}[(P_\nu D P_\nu)^2]
    =-\sum_i(\dot H_i+H_i^2)
    +\sum_i\nu_i^2(\dot H_i+\theta H_i),
    \label{eq:appendixRaycheck}
\end{equation}
which agrees with Eq.~\eqref{eq:appendixRkk}. This independently verifies the curvature sign and affine normalization used in Eq.~\eqref{eq:Raychaudhuri}.

\bibliographystyle{unsrtnat}
\bibliography{reference}

\end{CJK*}
\end{document}